\documentclass[aps,prl,superscriptaddress,citeautoscript,twocolumn,reprint,longbibliography,floatfix]{revtex4-2}
\usepackage{graphicx,amssymb,amsmath,epsf,bm,cprotect,comment,physics}
\epsfclipon


\newcommand{\expct}[1]{\langle{#1}\rangle}

\newcommand{\prt}[2]{\frac{\partial{#1}}{\partial{#2}}}

\renewcommand{\eqref}[1]{Eq.\,(\ref{#1})}
\newcommand{\eqsref}[1]{Eqs.\,(\ref{#1})}
\newcommand{\pref}[1]{(\ref{#1})}
\newcommand{\figref}[1]{Fig.\,\ref{#1}}
\newcommand{\figsref}[1]{Figs.\,\ref{#1}}
\newcommand{\tblref}[1]{Table\,\ref{#1}}

\newcommand{\supfigref}[1]{Fig.\,\ref{#1}}

\newcommand{\suptxtref}[1]{Supplemental Text~#1}

\newcommand{\etal}{\textit{et al}.\ }
\newcommand{\fKPZ}{f_\mathrm{KPZ}}


\usepackage{xr} 
\makeatletter
\newcommand*{\addFileDependency}[1]{
  \typeout{(#1)}
  \@addtofilelist{#1}
  \IfFileExists{#1}{}{\typeout{No file #1.}}
}
\makeatother

\newcommand*{\myexternaldocument}[1]{
    \externaldocument[S-]{build/#1}  
    \addFileDependency{#1.tex}
    \addFileDependency{build/#1.aux}  
}
\myexternaldocument{suppl}


\begin{document}

\title{Partial yet definite emergence of the Kardar-Parisi-Zhang class in isotropic spin chains}

\author{Kazumasa A. Takeuchi}
\email{kat@kaztake.org}
\affiliation{Department of Physics,\! The University of Tokyo,\! 7-3-1 Hongo,\! Bunkyo-ku,\! Tokyo 113-0033,\! Japan}%
\affiliation{Institute for Physics of Intelligence,\! The University of Tokyo,\! 7-3-1 Hongo,\! Bunkyo-ku,\! Tokyo 113-0033,\! Japan}%

\author{Kazuaki Takasan}
\affiliation{Department of Physics,\! The University of Tokyo,\! 7-3-1 Hongo,\! Bunkyo-ku,\! Tokyo 113-0033,\! Japan}%

\author{Ofer Busani}
\affiliation{School of Mathematics, University of Edinburgh, James Clerk Maxwell Building, Peter Guthrie Tait Road, Edinburgh, EH9 3FD, UK}%

\author{Patrik L. Ferrari}
\affiliation{Institute for Applied Mathematics, Bonn University, Endenicher Allee 60, 53115 Bonn, Germany}%


\author{Romain Vasseur}
\affiliation{Department of Theoretical Physics, University of Geneva, 24 quai Ernest-Ansermet, 1211 Gen\`eve, Switzerland}

\affiliation{Department of Physics, University of Massachusetts, Amherst, MA 01003, USA}%

\author{Jacopo De Nardis}
\affiliation{Laboratoire de Physique Th\'eorique et Mod\'elisation, CNRS UMR 8089,
CY Cergy Paris Universit\'e, 95302 Cergy-Pontoise Cedex, France}%

\date{\today}

\begin{abstract}
Integrable spin chains with a continuous non-Abelian symmetry, such as the one-dimensional isotropic Heisenberg model, show superdiffusive transport with little theoretical understanding. Although recent studies reported a surprising connection to the Kardar-Parisi-Zhang (KPZ) universality class in that case, this view was most recently questioned by discrepancies in full counting statistics. Here, by combining extensive numerical simulations of classical and quantum integrable isotropic spin chains with a framework developed by exact studies of the KPZ class, we characterize various two-point quantities that remain hitherto unexplored in spin chains, and find full agreement with KPZ scaling laws without adjustable parameters. This establishes the partial emergence of the KPZ class in integrable isotropic spin chains. Moreover, we reveal that the KPZ scaling laws are intact in the presence of an energy current, under the appropriate Galilean boost required by the propagation of spacetime correlation.
\end{abstract}

\maketitle

Characterizing transport properties of quantum many-body systems, in particular those of integrable systems with non-diffusive transport, is a longstanding objective of condensed matter physics.
Integrability typically results in ballistic transport, as successfully described by the framework of the generalized hydrodynamics \cite{CastroAlvaredo.etal-PRX2016,Bertini.etal-PRL2016,Doyon-SPPLN2020}, but it is faced with challenges when ballistic contributions are canceled by symmetry or other mechanisms \cite{Bulchandani.etal-JSM2021,Gopalakrishnan.Vasseur-ARCMP2024}.
Paradigmatic is the situation with a continuous non-Abelian symmetry, in particular the isotropic Heisenberg spin chain, which was reported to show superdiffusive transport with characteristic length $\xi(t) \sim t^{2/3}$ \cite{Fabricius.McCoy-PRB1998,Ljubotina.etal-NC2017,GV2019,PhysRevLett.123.186601}.
Surprisingly, this superdiffusive exponent was associated with an apparently unrelated universality class established mainly for classical non-equilibrium systems, namely the Kardar-Parisi-Zhang (KPZ) universality class for fluctuations of growing interfaces and related phenomena \cite{[{For reviews, see, e.g., }]Takeuchi-PA2018,*Corwin-RMTA2012,*Prolhac-a2024}.
Key evidence \cite{Ljubotina.etal-PRL2019,Weiner.etal-PRB2020} was the precise agreement of the equilibrium two-point spin correlation function with Pr\"ahofer and Spohn's exact solution for the KPZ class \cite{Prahofer.Spohn-JSP2004}, often denoted by $\fKPZ(\cdot)$.
On the one hand, this alleged manifestation of the KPZ class is deemed universal \cite{Ilievski.etal-PRX2021,Bulchandani.etal-JSM2021,Gopalakrishnan.Vasseur-ARCMP2024}, as confirmed in various isotropic integrable spin chains, whether quantum \cite{Ye.etal-PRL2022} or classical \cite{Das.etal-PRE2019}, and also supported by a few experimental investigations \cite{Scheie.etal-NP2021,Wei.etal-S2022}.
On the other hand, it is clear from the symmetry of spins that the magnetization transfer (integrated spin current) must show a symmetric distribution, unless the symmetry is explicitly broken by the initial condition or an external field \cite{Krajnik.etal-PRL2022}, while for KPZ the corresponding quantity, namely the interface height increment, is intrinsically asymmetric \cite{Takeuchi-PA2018}.
Furthermore, recent computational studies \cite{Krajnik.etal-PRL2024,Rosenberg.etal-S2024} revealed 
discrepancies beyond the symmetry difference, notably in the kurtosis.
These findings led the authors to set aside the possibility that spin transport in such systems may be described by the KPZ class, instead suggesting the need for a new universality class
\cite{Krajnik.etal-PRL2024,Rosenberg.etal-S2024}.
After all, all pieces of evidence for KPZ reported so far have been rather weak, being the scaling exponents, which are simple rational numbers such as $2/3$, and the agreement with Pr\"ahofer and Spohn's solution $\fKPZ(\cdot)$, which has been compared with arbitrarily fitted scaling coefficients.
Serious doubt is cast on the relevance of the KPZ class in this context, or more broadly in characterizing transport properties of a class of quantum many-body systems.

Here we clarify the fate of the KPZ universality in isotropic integrable spin chains, both classical and quantum.
First we remark that the deep body of knowledge gained by mathematical studies on the 1D KPZ class \cite{Takeuchi-PA2018} has not been fully utilized.
It dictates, for example, the mutual relation between scaling coefficients.
They contain universal quantifiers, which are lost if treated as free fitting parameters.
Moreover, the Pr\"ahofer-Spohn function is not the only two-point correlator with an exact solution \cite{Takeuchi-PA2018}; other two-point functions, such as the equal-time spatial correlator \cite{Quastel.Remenik-Inbook2014} and equal-position two-time correlator \cite{Ferrari.Spohn-SIG2016,Ferrari.Occelli-MPAG2019} have also been dealt with.
The purpose of the present Letter is to make full use of these results to carry out a comprehensive test of the KPZ universality in isotropic integrable spin chains.
Our results on various two-point functions hitherto unexplored in spin chains reveal that the KPZ class is indeed relevant in spin chains, yet it describes two-point functions only, hence the partial emergence of the KPZ class.

We study both classical and quantum integrable isotropic spin chains.
For the classical case, we use Krajnik and Prosen's model \cite{Krajnik.Prosen-JSP2020,Krajnik.etal-SPP2021,Krajnik.etal-PRL2022} based on the lattice Landau-Lifshitz magnet \cite{Lakshmanan-PTRSA2011}, which we shall call the KPLL model.
It is an integrable variant of the lattice Landau-Lifshitz model defined on a brick-layer space-time lattice (see \suptxtref{1} and \supfigref{S-figS:LL} \cite{suppl} for the complete definition), which converges to the Ishimori chain \cite{Ishimori-JPSJ1982}
\begin{equation}
\prt{\bm{S}_j}{t} = \frac{\bm{S}_j \times \bm{S}_{j-1}}{1+\bm{S}_j \cdot \bm{S}_{j-1}} + \frac{\bm{S}_j \times \bm{S}_{j+1}}{1+\bm{S}_j \cdot \bm{S}_{j+1}},
\label{eq:LL}
\end{equation}
in the continuous time limit \cite{Krajnik.Prosen-JSP2020,Krajnik.etal-SPP2021}.
For the quantum case, we study the isotropic Heisenberg chain, which is a representative integrable model \cite{Takahashi_book} defined by
\begin{equation}
\hat{H} = \sum_{j} \bm{\hat{S}}_j \cdot \bm{\hat{S}}_{j+1}, \label{eq:Heisenberg}
\end{equation}
with spin-1/2 operator $\bm{\hat{S}}_j=(\hat{S}_j^x, \hat{S}_j^y, \hat{S}_j^z)$.
In the following, we use the classical KPLL model to realize large-scale simulations for inspecting supposedly universal statistical properties of isotropic integrable spin chains, which are then confirmed by the quantum Heisenberg simulations.
For the KPLL model, unless otherwise stated, we started from infinite-temperature equilibrium states and obtained $N=10^4$ independent realizations with system size $L=40,000$ and the periodic boundary condition, with time step $0.1$.
The $z$-component of the spins, $S_j^z(t)$, is our magnetization field, denoted by $m(x,t)$ with $x=j$ hereafter whenever appropriate.
Another quantity of interest is the integrated spin current, or the magnetization transfer, $h(x,t) \equiv \int_0^t J(x,t')dt'$, with spin current $J(x,t)$. 
The magnetization transfer $h(x,t)$ corresponds to the height increment of the growing interfaces, which is central in the studies of the KPZ class.

\begin{figure}[tb]
\centering
\includegraphics[width=\hsize]{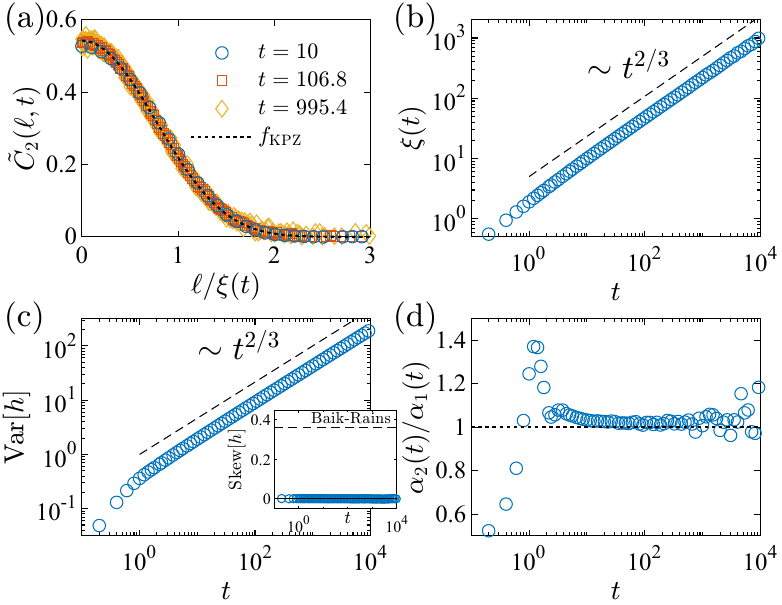}
\caption{
The two-point function and the magnetization transfer cumulants for the KPLL model.
(a) Rescaled two-point function $\tilde{C}_2(\ell,t) = \frac{\xi(t)}{\Omega}C_2(\ell,t)$ against $\ell/\xi(t)$, compared with the Pr\"{a}hofer-Spohn solution $\fKPZ(\cdot)$.
(b) Correlation length $\xi(t)$.
(c) Variance (main panel) and skewness (inset) of the magnetization transfer $h$. The dashed line in the inset indicates the skewness of the Baik-Rains distribution.
(d) Ratio of $\alpha_1(t)$ [from \eqref{eq:C2}] and $\alpha_2(t)$ [from \eqref{eq:var}].
}
\label{fig1}
\end{figure}

First we verify KPZ behavior of the KPLL model through the standard quantities.
Figure\,\ref{fig1}(a) displays the two-point function $C_2(\ell,t) \equiv \expct{m(x+\ell,t)m(x,0)}$, showing agreement with the Pr\"{a}hofer-Spohn exact solution $\fKPZ(\cdot)$.
Here the normalized function $\tilde{C}_2(\ell,t) \equiv \frac{\xi(t)}{\Omega}C_2(\ell,t)$ is shown, where $\Omega \equiv \int C_2(\ell,t)d\ell$ is conserved as a result of the conservation of the total magnetization $\int m(x,t)dx$, and $\xi(t)$ is the correlation length determined by $\frac{1}{\Omega}\int \ell^2 C_2(\ell,t) d\ell = \sigma^2 \xi(t)^2$ with $\sigma^2 \equiv \int u^2 \fKPZ(u) du \approx 0.51$.
This correlation length is confirmed to show the characteristic power law $\xi(t) \sim t^{2/3}$ of the KPZ class [\figref{fig1}(b)].
We also measure the variance of the magnetization transfer and find the characteristic KPZ growth, $\mathrm{Var}[h(x,t)] \sim t^{2/3}$ [\figref{fig1}(c)].
On the other hand, 
the skewness is zero and far from the value for the Baik-Rains distribution \cite{Baik.Rains-JSP2000} expected for the KPZ stationary state (inset).
Although the data shown so far are reproduction of known results \cite{Krajnik.etal-PRL2022,Krajnik.etal-PRL2024}, we can scrutinize nontrivial relationship underlying these quantities.
According to KPZ scaling laws \cite{Takeuchi-PA2018,Prahofer.Spohn-JSP2004,Iwatsuka.etal-PRL2020}, we have
\begin{gather}
C_2(\ell,t) \simeq \frac{2\alpha t^{2/3}}{\xi(t)^2} \fKPZ\left(\frac{\ell}{\xi(t)}\right),  \label{eq:C2} \\
\mathrm{Var}[h(x,t)] \simeq \alpha t^{2/3}\mathrm{Var[BR]},  \label{eq:var}
\end{gather}
where $\mathrm{Var[BR]} \approx 1.15$ is the variance of the Baik-Rains distribution and $\alpha$ is a coefficient.
Since these equations are not guaranteed to describe spin chains, here we evaluate $\alpha$ from data of $C_2(x,t)$ and $\mathrm{Var}[h(x,t)]$ independently and denote them by $\alpha_1(t)$ and $\alpha_2(t)$, respectively.
Then, remarkably, we find $\alpha_1(t) = \alpha_2(t)$ [\figref{fig1}(d)], substantiating the validity of the KPZ scaling laws \pref{eq:C2} and \pref{eq:var} in spin chains.
Note that \eqref{eq:var} includes the Baik-Rains variance,
even though the Baik-Rains distribution does not appear in spin chains.

\begin{figure}[tb]
\centering
\includegraphics[width=\hsize]{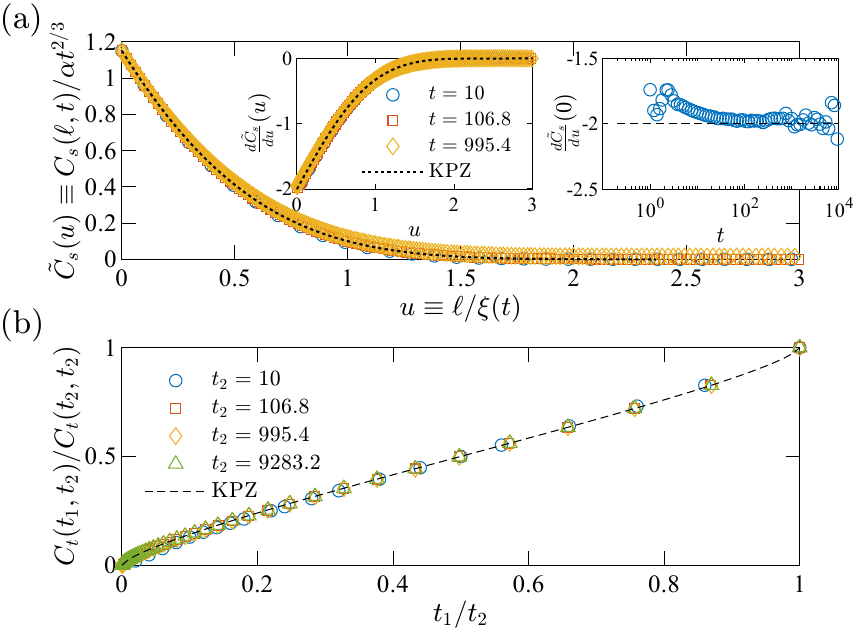}
\caption{
Spatial (a) and temporal (b) correlation functions of the magnetization transfer for the KPLL model.
(a) The rescaled spatial correlator $\tilde{C}_s(u) = C_s(\ell,t)/\alpha t^{2/3}$ (main panel) and its slope $\frac{d\tilde{C}_s}{du}(u)$ (left inset) against $u =\ell/\xi(t)$.
The dashed lines show the curves for the KPZ class, obtained by TASEP simulations.
The right inset compares the slope $\frac{d\tilde{C}_s}{du}(0)$ at $u=0$ with our exact result for the KPZ class, $\frac{d\tilde{C}_s}{du}(0)=-2$.
(b) The rescaled temporal correlator $C_t(t_1,t_2)/C_t(t_2,t_2)$ against $t_1/t_2$, compared with the Ferrari-Spohn solution [\eqref{eq:FS}] for the KPZ class.
}
\label{fig2}
\end{figure}

We further test the validity of KPZ scaling laws through other two-point quantities.
First we study the spatial correlation of the magnetization transfer:
\begin{equation}
C_s(\ell,t) \equiv \expct{h(x,t)h(x+\ell,t)} - \expct{h(x,t)}^2.
\end{equation}
Figure\,\ref{fig2}(a) shows it in the rescaled units, $\tilde{C}_s(u) \equiv C_s(\ell,t)/\alpha t^{2/3}$ against $u \equiv \ell/\xi(t)$.
For the KPZ class, the multi-point equal-time height correlation has been characterized intensively and described in terms of a family of stochastic processes called the Airy processes \cite{Prahofer.Spohn-JSP2002,Sasamoto-JPA2005,Borodin.etal-JSP2007,Widom-JSP2004,Basu.etal-CMP2023,Quastel.Remenik-Inbook2014}.
For the stationary state, a process called the Airy$_\mathrm{stat}$ process has been considered \cite{Baik.etal-CPAM2010} (see also a review \cite{Quastel.Remenik-Inbook2014}), but it describes the height measured in the absolute frame (say, $h_0(x,t)$) instead of the height increment $h(x,t) = h_0(x,t)-h_0(x,0)$ considered here.
We therefore introduce here the limiting process $\mathcal{A}_0(u)$ for the height increment $h(x,t)$, in other words the stationary version of the Airy$_\mathrm{stat}$ process, and call it the Airy$_0$ process.
We evaluate the covariance of the Airy$_0$ process $\mathcal{C}_0(u) \equiv \expct{\mathcal{A}_0(u)\mathcal{A}_0(0)}$ by numerical simulations of the totally asymmetric simple exclusion process (TASEP), a representative model in the KPZ class, and find it in excellent agreement with the data for the KPLL model [\figref{fig2}(a)].
Furthermore, we consider $\mathcal{C}_0(u)$ for small $u$ analytically and prove $\frac{d\mathcal{C}_0}{du}(0)=-2$ (see \suptxtref{2} \cite{suppl}).
This is confirmed by our data for both the KPLL model and the TASEP [insets of \figref{fig2}(a)].
Finally, we also investigate the temporal correlation of the magnetization transfer
\begin{equation}
C_t(t_1, t_2) \equiv \expct{h(x,t_1)h(x,t_2)} - \expct{h(x,t_1)}\expct{h(x,t_2)}.
\end{equation}
The results in \figref{fig2}(b) show excellent agreement with the exact solution for the KPZ class obtained by Ferrari and Spohn \cite{Ferrari.Spohn-SIG2016,Ferrari.Occelli-MPAG2019}:
\begin{equation}
 \frac{C_t(t_1, t_2)}{C_t(t_2, t_2)} \simeq \frac{1}{2}\left[1 + \tau^{2/3} - (1-\tau)^{2/3}\right],  \label{eq:FS}
\end{equation}
 with $\tau \equiv t_1/t_2$.
Theoretically, \eqref{eq:FS} is for $t_1,t_2 \to \infty$ with a fixed $\tau$, but the data converge remarkably fast.

\begin{figure}[tb]
\centering
\includegraphics[width=\hsize]{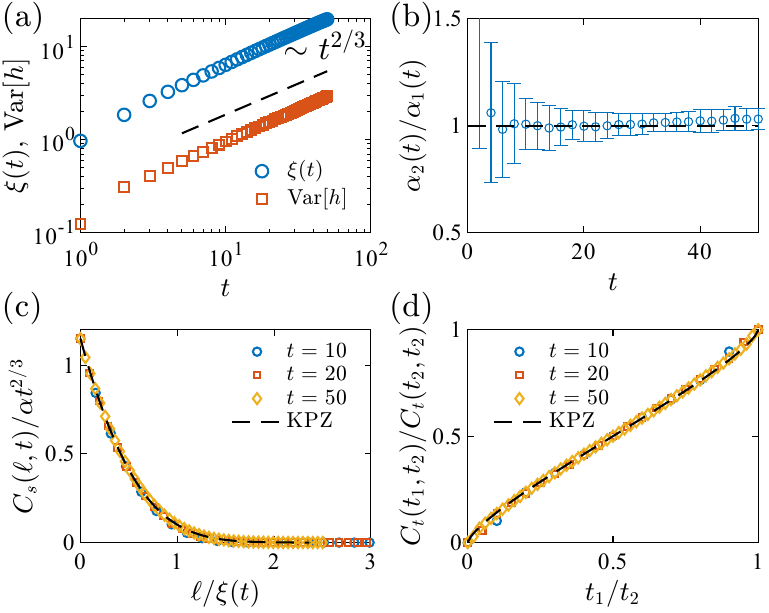}
\caption{
Results for the quantum Heisenberg model, with system size $L=100$ and maximum bond dimension $\chi=1600$ (see End Matter for details).
(a) Correlation length $\xi(t)$ and variance of the magnetization transfer, Var[$h$].
(b) Ratio $\alpha_2(t)/\alpha_1(t)$. See End Matter for the evaluation of the error bars.
(c,d) Spatial (c) and temporal (d) correlation functions of the magnetization transfer. 
}
\label{figQ}
\end{figure}

The results obtained so far for the classical KPLL model verified the validity of the KPZ scaling laws in various two-point quantities.
Remarkably, we confirm all these results in simulations of the quantum Heisenberg model too (\figref{figQ}) without any adjustable parameter. This establishes the universality of the results, encompassing both the quantum and classical worlds. The details of the simulation and the results of the quantum model are presented in End Matter.

Now we test the robustness of our findings under different situations, again using the classical KPLL model.
First, we consider the case with a non-vanishing energy current.
This is particularly tempting in view of the hydrodynamic description proposed by De Nardis \etal \cite{DeNardis.etal-PRL2023}, which predicts that left-moving and right-moving giant quasiparticles contribute equally to the magnetization, and this is why the distribution of $h$ becomes symmetric.
Therefore, it is important to clarify what happens if the left-right symmetry is broken, e.g., by the presence of a finite energy current.
We prepared such an initial condition by Monte Carlo sampling, using the statistical weight $\propto e^{-\lambda J_E}$ with total energy current $J_E \equiv -\sum_j \bm{S}_j \cdot (\bm{S}_{j+1} \times \bm{S}_{j+2})$ 
\footnote{
Since the exact expression of the energy current for the KPLL model is unknown, here we borrow the expression for the Heisenberg model
\cite{Zotos.etal-PRB1997}, which is expected to have a significant overlap with the true energy current of the KPLL model.
} and $\lambda=-1$.
Thereby, we indeed realize a situation where the energy current reaches a constant finite value after a short transient [\figref{fig3}(a) inset].

\begin{figure}[tb]
\centering
\includegraphics[width=\hsize]{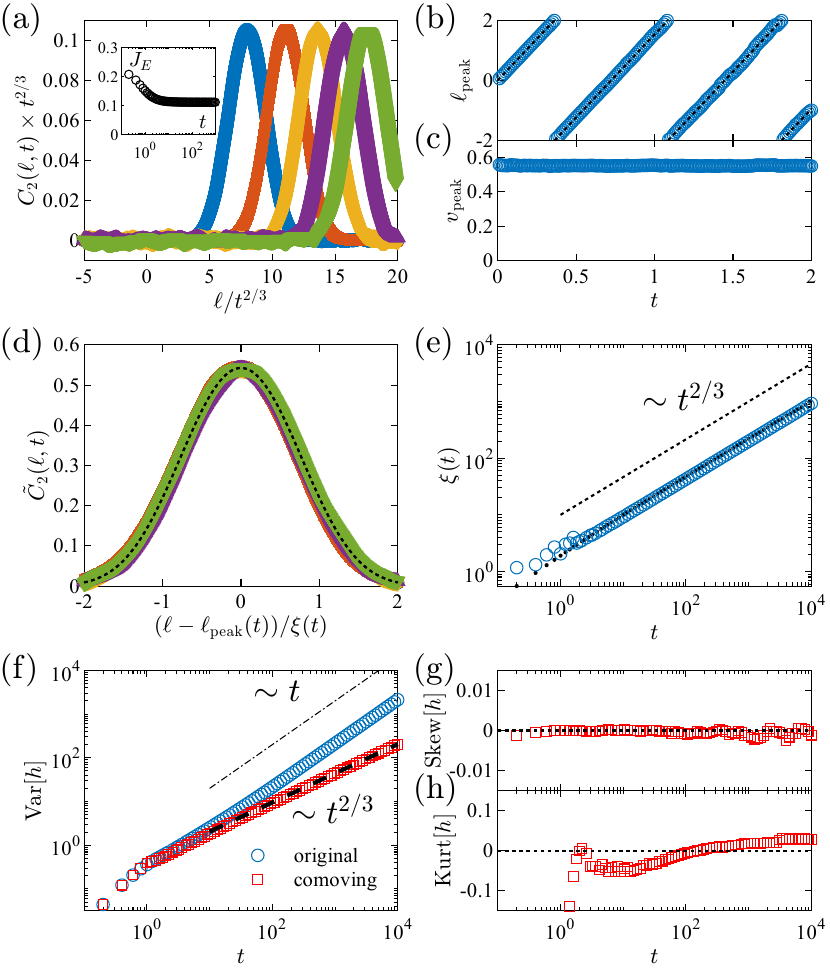}
\caption{
Results for KPLL with a finite energy current.
(a) Two-point function $C_2(\ell,t)/t^{2/3}$ against $\ell/\xi(t)$ for different times, $t = 3000, 8000, 15000, 23000, 32000$ from left to right. Data smoothed by the locally weighted scatterplot smoothing method are displayed.
Inset: total energy current $J_E(t)$.
(b)(c) The location and the velocity of the peak of $C_2(\ell,t)$, $\ell_\mathrm{peak}(t)$ and $v_\mathrm{peak}$, respectively. The dashed line in (b) shows $\ell_\mathrm{peak}(t) = vt$ with $v = 0.5523$, wrapped by the periodic boundary. 
(d) Rescaled two-point function $\tilde{C}_2(\ell,t)$ centered at $\ell = \ell_\mathrm{peak}(t)$ (symbols, same colors as (a)), compared with the Pr\"{a}hofer-Spohn solution $\fKPZ(\cdot)$ (dashed line).
(e) Correlation length $\xi(t)$. The black dots are the data for the case without energy current, shown in \figref{fig1}(b).
(f) Variance of the magnetization transfer $h(x,t)$, measured in the original and comoving frames (blue circles and red squares, respectively).
(g)(h) Skewness and kurtosis of the magnetization transfer $h(x,t)$ in the comoving frame.
The values for the Baik-Rains distribution are $0.359$ and $0.289$, respectively \cite{Prahofer.Spohn-PRL2000}, which are far from the data.
}
\label{fig3}
\end{figure}

Figure\,\ref{fig3}(a) shows the two-point function $\tilde{C}_2(\ell,t) = \frac{\xi(t)}{\Omega}C_2(\ell,t)$ in this case.
Interestingly, now we find the peak position of the correlation function moving at a constant velocity, $\ell_\mathrm{peak} = v_\mathrm{peak}t$ with $v_\mathrm{peak} = 0.5523$ [\figref{fig3}(b)(c)].
Apart from this, the form of the two-point function turns out to be unchanged, i.e., it is the Pr\"ahofer-Spohn function $\fKPZ(\cdot)$ [\figref{fig3}(d)] with correlation length growing as $\xi(t) \sim t^{2/3}$ [\figref{fig3}(e)].
Therefore, the KPZ physics remains intact in the presence of a finite energy current.
The propagation of the space-time correlation revealed in \figref{fig3}(a)-(c) is analogous to the case of growing tilted interfaces \cite{Ferrari-JSM2008,Corwin.etal-AIHPBPS2012} and nonlinear fluctuating hydrodynamics for unharmonic chains \cite{Mendl.Spohn-PRE2016}.
An important lesson from these studies is that one should measure the magnetization transfer $h(x,t)$ in the frame comoving with the space-time correlator, which amounts to the following expression:
\begin{equation}
    h(x,t) \equiv \int_0^t J(x,t')dt' - \int_{x-vt}^x m(x',0) dx'  \label{eq:height}
\end{equation}
with $v = v_\mathrm{peak}$.
With this appropriate definition of the magnetization transfer, we indeed confirm the KPZ growth of the variance, 
\eqref{eq:var} [\figref{fig3}(f) red squares], whereas the na\"ive definition $h(x,t) = \int_0^t J(x,t')dt'$ 
fails to capture the KPZ exponent (blue circles).
On the other hand, even with the definition \pref{eq:height} without left-right symmetry, we do not find any indication of asymmetric distribution, as evidenced by vanishing skewness [\figref{fig3}(g)].
The value of the kurtosis also remains far from that of the Baik-Rains distribution [\figref{fig3}(h)], just like the case without energy current \cite{Krajnik.etal-PRL2024,Rosenberg.etal-S2024}.
To summarize, the presence of a finite energy current only necessitates considering the comoving frame; otherwise, it seems to have no effect on relevant statistical quantities, as long as they are measured in the comoving frame.

\begin{figure}[tb]
\centering
\includegraphics[width=\hsize]{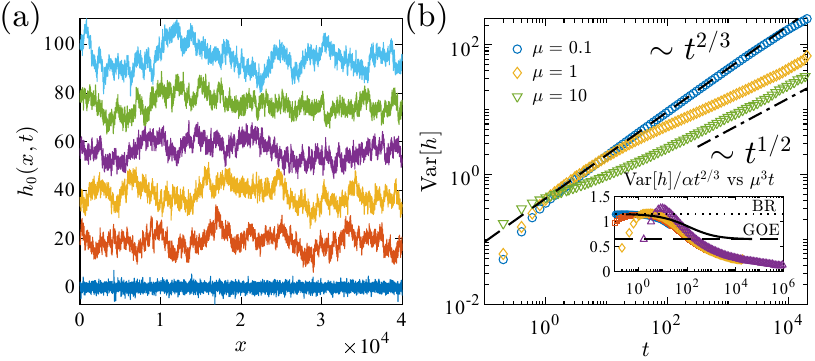}
\caption{
Results for KPLL with the flat initial condition.
(a) Snapshots of the height $h_0(x,t) = h(x,t) + h_0(x,0)$ at $t=0, 2000, 4000, \cdots, 10000$ from bottom to top, for $\mu=1$. For visibility, every subsequent snapshot is shifted upward by $20$.
(b) Variance of the magnetization transfer $h$ for different $\mu$.
The dashed line is the KPZ growth law \pref{eq:var} with $\alpha$ determined from the equilibrium simulations [\figref{fig1}(d)].
The dashed-dotted line is a guide for the eyes showing Var$[h] \sim t^{1/2}$. 
Inset: rescaled variance Var$[h]/\alpha t^{2/3}$ against $\mu^3 t$, for $\mu = 0.1, 0.5, 1, 2$, from top to bottom at the leftmost part of the datasets.
The bold solid line displays the behavior for KPZ interfaces \cite{Takeuchi-PRL2013} (with arbitrary horizontal shift), 
showing crossover from the Baik-Rains (BR) distribution (dotted line) to the characteristic distribution for flat interfaces, namely the GOE Tracy-Widom distribution (dashed line). 
Simulation parameters were $L = 40,000$ and $N=1,000$ for $\mu = 0.1, 0.5, 1$ and $L = 400,000$ and $N=33$ for $\mu = 2, 10$.
}
\label{fig4}
\end{figure}

Finally, we study the effect of the initial condition. 
Universal statistical properties of the authentic KPZ class are known to depend on the initial condition, the three representative cases being the domain wall (curved interface), flat, and stationary initial conditions \cite{Takeuchi-PA2018}.
It is important to assess whether KPZ scaling laws for non-stationary cases can describe spin chains under the corresponding, non-equilibrium settings.
For the domain wall initial condition, recent simulations suggested that KPZ may be observed only for finite times, being eventually replaced by the diffusive scaling \cite{Krajnik.etal-PRL2022}.
This is argued to result from the violation of the SU(2) symmetry, due to the chemical potential $\mu$ used to prepare each domain of biased spins \cite{Gopalakrishnan.etal-PNAS2019}.
Compared to this, the fate of the flat initial condition 
is not clear and has not been studied to our knowledge, even if some recent simulations of quantum spin chains hint that KPZ behavior is also visible when starting from non-stationary states \cite{PhysRevLett.132.130401}.

We realize a flat initial condition, by drawing each spin $\bm{S}_j(0)$ from infinite-temperature equilibrium distribution with a space-dependent vectorial chemical potential $\bm{\mu}_j$, $\rho(\bm{S}_j) = \frac{|\bm{\mu}_j|}{4\pi \sinh |\bm{\mu}_j|} e^{\bm{\mu}_j \cdot \bm{S}_j}$.
The chemical potential is determined as follows: (i) $\bm{\mu}_1 = 0$, (ii) $\bm{\mu}_{j \geq 2} = -\mu \bm{S}_{j-1}^\mathrm{tot}/|\bm{S}_{j-1}^\mathrm{tot}|$ with $\bm{S}_{j-1}^\mathrm{tot} \equiv \sum_{j'=1}^{j-1} \bm{S}_{j'}(0)$ and $\mu>0$ \footnote{After this initialization, we shift all spins slightly so that $\sum_j S_j^z(0)=0$ holds strictly.}. 
This amounts to generating an initial height profile $h_0(x,0) \equiv -\sum_{0 \leq j <x } S_j^z(0)$ that looks like a trajectory of an Ornstein-Uhlenbeck process [\figref{fig4}(a) bottom curve] instead of a Brownian trajectory for the equilibrium case $\mu=0$.
For KPZ interfaces, we demonstrate with TASEP that such initial conditions result asymptotically in the flat KPZ statistics \cite{Quastel.Remenik-TAMS2019}, through a dynamical crossover from the stationary statistics (the Baik-Rains distribution) to the flat one (the GOE Tracy-Widom distribution) \cite{Takeuchi-PRL2013} without changing the scaling Var$[h] \sim t^{2/3}$
(\supfigref{S-figS:TASEP} \cite{suppl}).
In contrast, for the KPLL magnet, we find completely different behavior for $\mu>0$, showing crossover from the KPZ scaling $t^{2/3}$ to the diffusive one $t^{1/2}$ 
[\figref{fig4}(b)].
Close scrutiny reveals that this crossover takes place at time scale $\mu^{-3}$ [\figref{fig4}(b) inset], in agreement with anomalous relaxation discussed in Ref.\,\cite{Gopalakrishnan.etal-PNAS2019}.
This indicates that the local violation of the isotropy (SU(2) for quantum spins) is sufficient for KPZ to break down in spin chains.

\begin{table}[tb]
 \caption{KPZ two-point properties in equilibrium integrable spin chains. The new results of this work are marked with *.}
 \label{tbl:comparison}
 \catcode`?=\active \def?{\phantom{0}}
 \begin{tabular}{lcc} \hline
 exponents & \multicolumn{2}{c}{$\xi(t)\sim t^{2/3}, ~\mathrm{Var}[h(x,t)] \sim t^{2/3}$} \\
 two-point function $C_2(\ell,t)$ & \multicolumn{2}{c}{Pr\"ahofer-Spohn solution $\fKPZ$} \\ 
 *variance amplitude & \multicolumn{2}{c}{\eqref{eq:var} with $\mathrm{Var[BR]} \approx 1.15$} \\
 *spatial correlation $C_s(\ell,t)$ & \multicolumn{2}{c}{Airy$_0$ covariance, \figref{fig2}(a)} \\
 *time correlation $C_t(t_1,t_2)$ & \multicolumn{2}{c}{Ferrari-Spohn solution, \eqref{eq:FS}} \\ \hline
 \end{tabular}%
\end{table}

In summary, using the integrable isotropic spin chains, both classical and quantum, we carried out quantitative tests of KPZ scaling laws for various two-point quantities that have not been characterized for spin chains so far, and found precise agreement in all of them (\tblref{tbl:comparison}).
Nevertheless, the KPZ scaling laws seem to not describe higher-order quantities, as evidenced by earlier studies \cite{Krajnik.etal-PRL2024,Rosenberg.etal-S2024}.
Therefore, as the main conclusion of the Letter, the strict KPZ class rules only a subset of statistical properties of isotropic integrable spin chains (and other cases with a continuous non-Abelian symmetry \cite{Ilievski.etal-PRX2021}).
It is of primary importance to clarify the underlying principles of such partial emergence of the KPZ class.
The coupled Burgers equations in Ref.\,\cite{DeNardis.etal-PRL2023} showed how KPZ two-point quantities can emerge out of symmetric one-point distributions, but the kurtosis remains unexplained. 
It is an open question if one can extend the theory to make it fully consistent, e.g., by introducing a larger number of hydrodynamic modes.
Our finding on the robustness of the KPZ scaling in the presence of energy current, as well as its breakdown by the local violation of isotropy, may also be hints for probing this mystery, which hangs over such simple quantum many-body systems as the isotropic Heisenberg spin chain.

\begin{acknowledgments}
\paragraph{Acknowledgments.} 
The authors acknowledge that this work deeply benefited from collaboration with Sarang Gopalakrishnan. K.A.T. also appreciates enlightening discussions with I. Bloch, H. Spohn, D. Wei, and J. Zeiher, as well as useful exchanges in Les Houches School of Physics ``Kardar–Parisi–Zhang equation: new trends in theories and experiments'' (2024). K.T. thanks T. Nakamoto and T. Soejima for the helpful discussion on tensor network simulation.  
The work is supported in part by KAKENHI from Japan Society for the Promotion of Science (Grant Nos. JP23K17664 (K.A.T. and K.T.), JP20H01826, JP19H05800, JP24K00593 (K.A.T.) and JP22K20350 (K.T.)), by JST PRESTO (Grant No.~JPMJPR2256) (K.T.), by the Deutsche Forschungsgemeinschaft (DFG, German Research Foundation) under Germany’s Excellence Strategy - GZ 2047/1, projekt-id 390685813 and by the Projektnummer 211504053 - SFB 1060 (P.L.F.), by ANR-22-CPJ1-0021-01 and ERC Starting Grant 101042293
(HEPIQ) (J.D.N), and by the US Department of Energy, Office of Science, Basic Energy Sciences, under award No. DE-SC0023999 (R.V.). 
\end{acknowledgments}

\section{End Matter}
\paragraph{Quantum model simulation.} 
To obtain the data shown in \figref{figQ}, we numerically evaluate $C_2(\ell, t)$, $C_s(\ell, t)$ and $C_t(\ell, t)$ for the infinite temperature equilibrium state. These quantities are defined for the quantum model as follows: $C_2(\ell, t)\equiv \langle\hat{S}_{0}^z \hat{S}_{\ell}^z(t)\rangle$, $C_s(\ell, t)\equiv \langle \Delta\hat{S}_{0} (t)\Delta\hat{S}_{\ell}(t)\rangle -  \langle \Delta\hat{S}_{\ell}(t)\rangle^2$, and $C_t(t_1, t_2)\equiv \langle \Delta\hat{S}_{0}(t_1) \Delta\hat{S}_{0}(t_2)\rangle- \langle \Delta \hat{S}_{0}(t_1)\rangle \langle \Delta \hat{S}_{0}(t_2)\rangle$ where $\langle \hat{\mathcal{O}} \rangle \equiv \mathrm{Tr}[\hat{\mathcal{O}}]/2^L$, $\hat{S}_j^\alpha(t)\equiv\hat{U}_t \hat{S}_j^\alpha \hat{U}_t^\dagger$~$(\alpha=x,y,z)$ with $\hat{U}_t \equiv e^{-i\hat{H}t}$, $\hat{J}_j(t) \equiv \hat{S}^x_j(t) \hat{S}^y_{j+1}(t) -\hat{S}^y_j(t) \hat{S}^x_{j+1}(t)$ and $\Delta\hat{S}_j(t) \equiv \int_0^t \hat{J}_j(t') dt'$ represents the local spin current and magnetization transfer at the $j$-th site respectively. Here, the site index $0$ represents the $(L/2)$-th site counted from the left boundary and is nearly at the center of the system, where $L$ is the total number of sites. The index $j$ runs from $-(L/2)+1$ to $L/2$.
In the quantum context, the definition of charge transfer must come with a prescription for the appropriate time-ordering of the operator $\Delta \hat{S}_j(t)$ as the current operators at different times do not commute with each other; see Refs~\cite{Levitov1993, Levitov1996}. However, for simplicity, our correlation functions $C_s(\ell, t)$ and $C_t(t_1, t_2)$ are not defined with such a time-ordering, even though these quantities are still physically observable in principle, as we argue in \suptxtref{3} \cite{suppl}. Testing the KPZ scaling for correlation functions with the appropriate time ordering to reproduce charge transfer is an important problem left for future studies.

To obtain the time-evolved operators and their expectation values, we use the time-evolving block decimation (TEBD) method \cite{Vidal-PRL2004} for matrix product operators, with time step $\Delta t=0.05$ and maximum bond dimension $\chi=400, 800, 1200, 1600$, implemented by ITensor \cite{ITensor}.
We consider $L=100$ and the open boundary condition for this simulation. 
For $C_s(\ell, t)$ and $C_t(t_1, t_2)$, we first obtain the current-current correlation function  $F_\ell(t) \equiv \langle \hat{J}_\ell(0) \hat{U}_t \hat{J}_0(0) \hat{U}_t^\dagger \rangle$ using the TEBD and then compute numerical integrals, such as $\langle \Delta\hat{S}_{0} (t)\Delta\hat{S}_{\ell}(t)\rangle = \int_0^t dt' \int_0^{t} dt'' F_\ell(t'-t'')$. The time integral is evaluated with a large time step $20\Delta t$. 

For simulations of such quantum models, it is crucial to deal with truncation error due to the finite bond dimension $\chi$. To understand the effect of this error, we have performed simulations with different $\chi$ and compared them. 
The results described in the following indicate that a large value of $\chi$ is needed for quantitative verification of the KPZ scaling functions, larger than that needed for the exponents.
While most of the data shown in \figref{figQ} are well-converged with respect to $\chi$ within numerical accuracy, $\alpha_1(t)$ is not, and we have systematically evaluated the error bars as explained below. Note that 
the Trotter error is not the dominant source of error in the present study.

\paragraph{Numerical results for the quantum model.}
Let us start with the quantities related to $C_2(\ell, t)$. Figure~\ref{figQ2}(a) shows $\Omega(t)\equiv \sum_l C_2(\ell, t)$, which 
should be conserved by the $U(1)$ symmetry of the Hamiltonian [\eqref{eq:Heisenberg}]. However, because of the truncation error due to finite bond dimension, $\Omega(t)$ starts to deviate at long times. Indeed, we find that the deviation becomes smaller with increasing $\chi$. 
With $\chi=1600$ used for the main results shown in \figref{figQ}, the deviation remained within $0.8\%$ of the initial value, at the largest time inspected in this work ($t=50$).
Next, we evaluate the correlation length $\xi(t)\equiv [\frac{1}{\sigma^2 \Omega(t)}\sum_\ell \ell^2 C_2(\ell, t)]^{1/2}$ with $\sigma^2 \approx 0.51$. The data exhibit the power law $\xi(t) \sim t^{2/3}$ [\figref{figQ2}(b)] as expected from the KPZ scaling.

Using these results, we plot the rescaled two-point function $\tilde{C}_2(\ell, t) = \frac{\xi(t)}{\Omega(t)}C_2(\ell,t)$ in \figref{figQ2}(c). The data are in good agreement with the Pr\"ahofer-Spohn exact solution $\fKPZ(u)$ with $u = \ell/\xi(t)$, confirming the earlier finding in Ref.\,\cite{Ljubotina.etal-PRL2019} without adjustable parameters.
However, to evaluate the parameter $\alpha_1(t)$ therefrom (\eqref{eq:C2} with $\alpha=\alpha_1(t)$), we must deal with the slight deviation from $\fKPZ(u)$ visible even for the largest $\chi$ we used.
To this end, in the inset of \figref{figQ2}(c),
we plot $\tilde{\alpha}_1(\ell, t) \equiv \xi(t)^2 C_2(\ell, t)/(2 t^{2/3} \fKPZ(\ell / \xi(t)))$. According to \eqref{eq:C2}, this quantity is expected to take a constant value $\alpha$ in the long-time limit.
Indeed, as $t$ increases, $\tilde{\alpha}_1(\ell, t)$ develops a plateau.
The deviation from the plateau gives a measure of the error in the estimate $\alpha_1(t)$.
Specifically, we measure $\alpha_{1}^\mathrm{max}(t) \equiv \max_u \tilde{\alpha}_1(u,t)$ and $\alpha_{1}^\mathrm{min}(t) \equiv \min_{|u| < 1} \tilde{\alpha}_1(u,t)$ [\figref{figQ2}(d)] and use $\frac{\alpha_1^\mathrm{max}(t) + \alpha_1^\mathrm{min}(t)}{2}$ as the estimate of $\alpha_1(t)$ and $\alpha_1^\mathrm{max}(t) - \alpha_1^\mathrm{min}(t)$ as its error bar.
The results in \figref{figQ2}(d) indeed show that $\alpha_1(t)$ tends to converge to a constant in the long-time and large-$\chi$ limit. 

\begin{figure}[tb]
  \centering
  \includegraphics[width=\hsize]{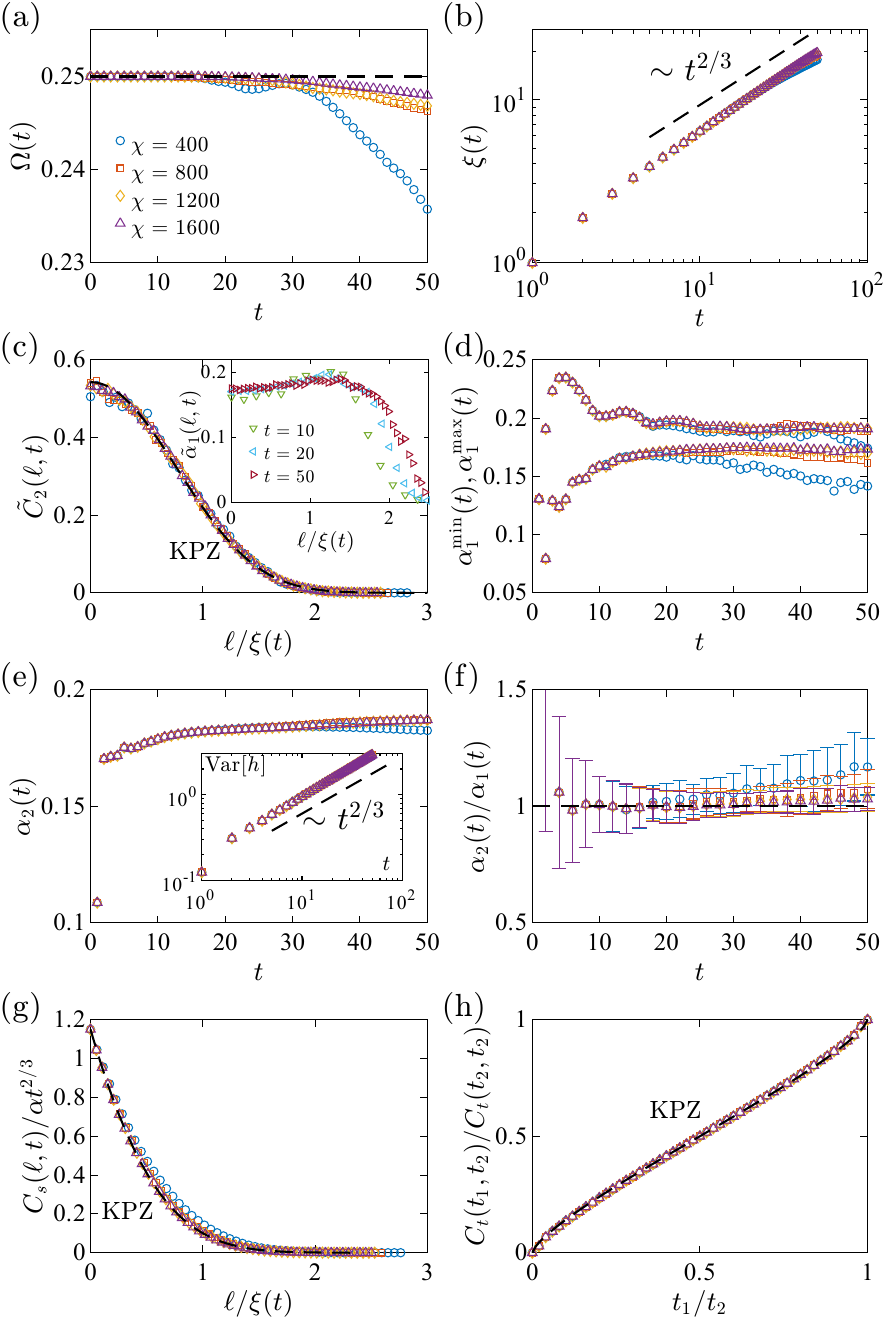}
  \caption{
  Detailed results for the quantum Heisenberg model (system size $L=100$ and open boundary condition). All panels except the inset of (c) show the $\chi$-dependence of $\Omega(t)$ (a), $\xi(t)$ (b), $\tilde{C}_2(\ell,t)$ (c), $\alpha_1^\mathrm{min}(t)$ and $\alpha_1^\mathrm{max}(t)$ (d), $\alpha_2(t)$ (e), $\mathrm{Var}[h]$ [(e) inset], $\alpha_2(t)/\alpha_1(t)$ (f), $C_s(\ell,t)$ (g), and $C_t(t_1,t_2)$ (h), where $\chi$ denotes the maximum bond dimension in the TEBD simulation. The same symbols are used in these panels as indicated in the legend of (a). The error bars in (f) are given by the range $[\alpha_2(t)/\alpha_1^\mathrm{max}(t), \alpha_2(t)/\alpha_1^\mathrm{min}(t)]$. The dashed lines in (c)(g)(h) are the Pr\"ahofer-Spohn solution $\fKPZ(u)$, the Airy$_0$ covariance $\mathcal{C}_0(u)$, and the Ferrari-Spohn solution \pref{eq:FS}, respectively. The inset of (c) shows the time dependence of $\tilde{\alpha}_1(\ell,t)$ with $\chi = 1600$.    
  }
  \label{figQ2}
\end{figure}

Next, we consider the quantities related to the magnetization transfer. 
First, its variance exhibits the KPZ scaling law, $\mathrm{Var}[h(x,t)] \sim t^{2/3}$ [\figref{figQ2}(e) inset].
From this, we evaluate $\alpha_2(t) \equiv \mathrm{Var}[h(0,t)]/ (t^{2/3}\mathrm{Var[BR]})$ in \figref{figQ2}(e), where the data converge to a constant value in the long-time limit. 
The KPZ scaling laws predict $\alpha_1(t) = \alpha_2(t) =$ a constant $\alpha$ [\eqsref{eq:C2} and \pref{eq:var}].
This is confirmed in \figref{figQ2}(f), where the ratio $\alpha_2(t)/\alpha_1(t)$ approaches one in the long-time and large-$\chi$ limit. 
Finally, we plot the spatial and temporal correlation functions, $C_s(\ell,t)$ and $C_t(t_1, t_2)$, respectively, in the rescaled units in \figsref{figQ2}(g)(h). They converge remarkably with increasing $\chi$ and show good agreement with the Airy$_0$ covariance $\mathcal{C}_0(u)$ and the Ferrari-Spohn exact solution \pref{eq:FS}, respectively, without adjustable parameters. These results provide strong evidence that the KPZ scaling laws for the two-point quantities are valid in the quantum model too. 


\bibliography{ref}

\end{document}


\title{Supplemental Material for ``Partial yet definite emergence of the Kardar-Parisi-Zhang class in isotropic spin chains''}

\author{Kazumasa A. Takeuchi}
\email{kat@kaztake.org}
\affiliation{Department of Physics,\! The University of Tokyo,\! 7-3-1 Hongo,\! Bunkyo-ku,\! Tokyo 113-0033,\! Japan}%
\affiliation{Institute for Physics of Intelligence,\! The University of Tokyo,\! 7-3-1 Hongo,\! Bunkyo-ku,\! Tokyo 113-0033,\! Japan}%

\author{Kazuaki Takasan}
\affiliation{Department of Physics,\! The University of Tokyo,\! 7-3-1 Hongo,\! Bunkyo-ku,\! Tokyo 113-0033,\! Japan}%

\author{Ofer Busani}
\affiliation{School of Mathematics, University of Edinburgh, James Clerk Maxwell Building, Peter Guthrie Tait Road, Edinburgh, EH9 3FD, UK}%

\author{Patrik L. Ferrari}
\affiliation{Institute for Applied Mathematics, Bonn University, Endenicher Allee 60, 53115 Bonn, Germany}%


\author{Romain Vasseur}
\affiliation{Department of Theoretical Physics, University of Geneva, 24 quai Ernest-Ansermet, 1211 Gen\`eve, Switzerland}

\affiliation{Department of Physics, University of Massachusetts, Amherst, MA 01003, USA}%

\author{Jacopo De Nardis}
\affiliation{Laboratoire de Physique Th\'eorique et Mod\'elisation, CNRS UMR 8089,
CY Cergy Paris Universit\'e, 95302 Cergy-Pontoise Cedex, France}%

\date{\today}

\maketitle

\onecolumngrid
\section{Supplemental Text 1: Integrable discrete Landau-Lifshitz magnet}

\begin{figure}[b!]
\includegraphics[width=0.9\hsize,clip]{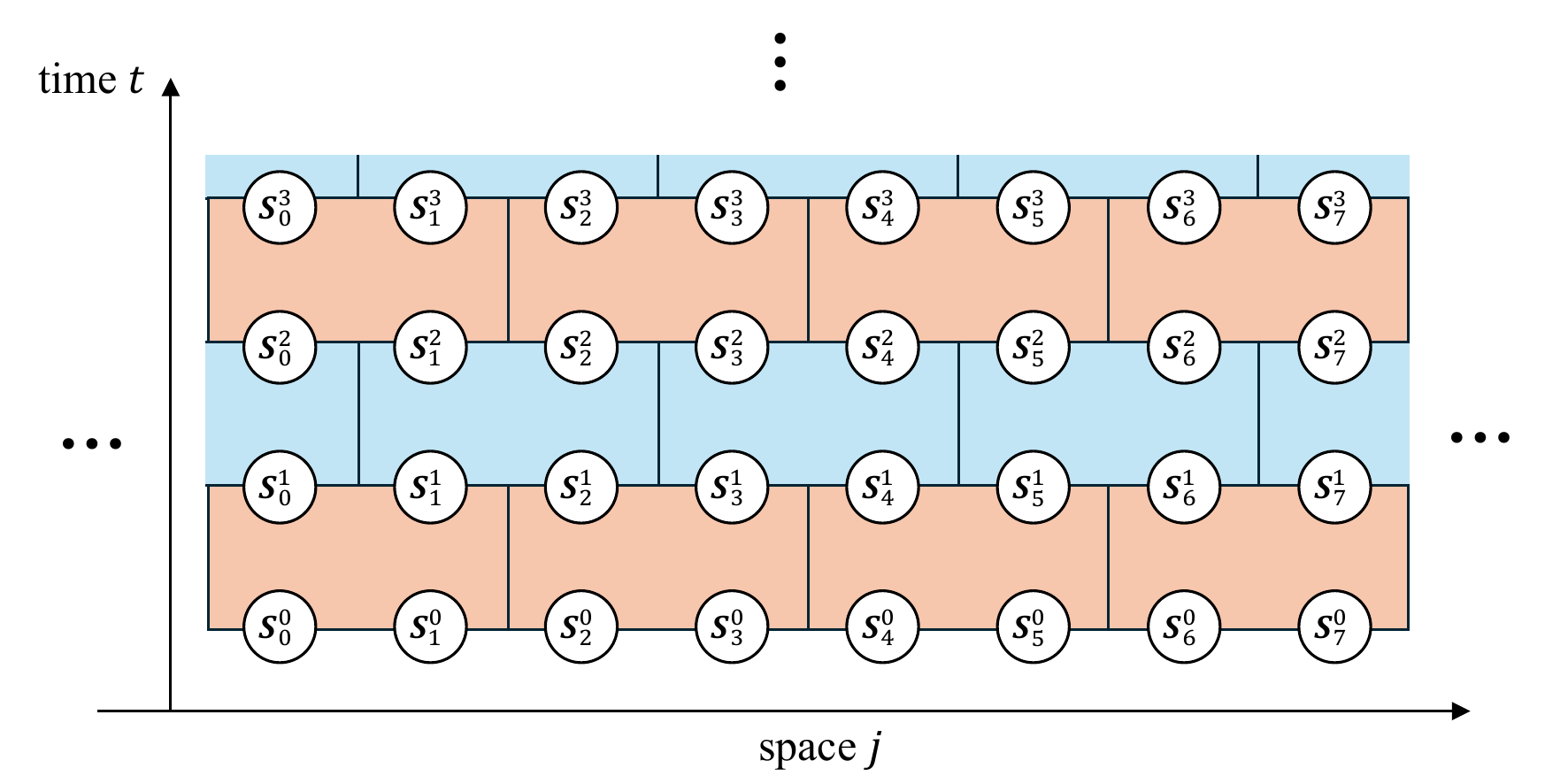}
\centering
\caption{
Schematic of the brick-layer space-time lattice of the KPLL model \cite{Krajnik.Prosen-JSP2020,Krajnik.etal-SPP2021,Krajnik.etal-PRL2022}.
}
\label{figS:LL}
\end{figure}

Here we describe the definition of the integrable discrete Landau-Lifshitz magnet introduced by Krajnik and Prosen \cite{Krajnik.Prosen-JSP2020,Krajnik.etal-SPP2021,Krajnik.etal-PRL2022} (the KPLL model), which we mainly use in this Letter.
The KPLL model is defined on a space-time lattice, as sketched in \figref{figS:LL}.
Time is discretized as $t = n\Delta t$ with $n \in \mathbb{Z}$.
Each site has a classic spin, $\bm{S}_j^n$ ($=\bm{S}_j(t)$ of the main text), which is a unit vector, and the periodic boundary is assumed.
As sketched in \figref{figS:LL} as a brick layer, the time evolution takes place pairwise, alternately, as follows:
\begin{equation} 
(\bm{S}_{2i}^{2m+1}, \bm{S}_{2i+1}^{2m+1}) = \Phi(\bm{S}_{2i}^{2m}, \bm{S}_{2i+1}^{2m}), \qquad
(\bm{S}_{2i-1}^{2m+2}, \bm{S}_{2i}^{2m+2}) = \Phi(\bm{S}_{2i-1}^{2m+1}, \bm{S}_{2i}^{2m+1}),
\end{equation}
with $i, m \in \mathbb{Z}$ and
\begin{equation}
    \Phi(\bm{S}_1, \bm{S}_2) \equiv \frac{1}{s^2 + \Delta t^2}(s^2 \bm{S}_1 + \Delta t^2 \bm{S}_2 + \Delta t \bm{S}_1 \times \bm{S}_2, \quad s^2 \bm{S}_2 + \Delta t^2 \bm{S}_1 + \Delta t \bm{S}_2 \times \bm{S}_1), \qquad s^2 \equiv \frac{1}{2}(1+ \bm{S}_1 \cdot \bm{S}_2).
\end{equation}
In the continuous time limit $\Delta t \to 0$, this model converges to the Ishimori chain \cite{Ishimori-JPSJ1982} [\eqref{M-eq:LL}]
\begin{equation}
\prt{\bm{S}_j}{t} = \{\bm{S}_j, H\} = \frac{\bm{S}_j \times \bm{S}_{j-1}}{1+\bm{S}_j \cdot \bm{S}_{j-1}} + \frac{\bm{S}_j \times \bm{S}_{j+1}}{1+\bm{S}_j \cdot \bm{S}_{j+1}},
\end{equation}
with Hamiltonian $H = \sum_j \log(1+\bm{S}_j \cdot \bm{S}_{j+1})$, a well-known integrable variant of the lattice Landau-Lifshitz model.


The KPLL model satisfies the following continuity equation for the $z$-component of the spin, $S_j^n \equiv (\bm{S}_j^n)^z$:
\begin{equation}
S_{2i}^{2m+2} - S_{2i}^{2m} = -J_{2i+\frac12}^{2m+\frac12} + J_{2i-\frac12}^{2m+\frac32}, \qquad
S_{2i+1}^{2m+2} - S_{2i+1}^{2m} = -J_{2i+\frac32}^{2m+\frac32} + J_{2i+\frac12}^{2m+\frac12},
\end{equation}
with the spin current (integrated over the time step $\Delta t$)
\begin{equation}
J_{2i+\frac12}^{2m+\frac12} = S_{2i+1}^{2m+1} - S_{2i+1}^{2m} = -(S_{2i}^{2m+1} - S_{2i}^{2m}), \qquad
J_{2i-\frac12}^{2m+\frac32} = S_{2i}^{2m+2} - S_{2i}^{2m+1} = -(S_{2i-1}^{2m+2} - S_{2i-1}^{2m+1}).
\end{equation}
Of course, by setting $J_{2i+\frac12}^{2m+\frac32} = J_{2i-\frac12}^{2m+\frac12} = 0$, we also have the following simplest expression of the continuity equation: 
\begin{equation}
    S_j^{n+1} - S_j^n = -J_{j+\frac12}^{n+\frac12} + J_{j-\frac12}^{n+\frac12}.
\end{equation}
With this spin current, the magnetization transfer (the integrated spin current) discussed in the main text is defined by
\begin{equation}
    h_{j+\frac12}^n \equiv \sum_{n'=0}^{n-1} J_{j+\frac12}^{n'+\frac12},
\end{equation}
which is denoted by $h(x,t) = \int_0^t J(x,t') dt'$ in the main text, with $x=j+\frac12$ and $t=n\Delta t$.
The expression for the case with a finite energy current, \eqref{M-eq:height}, can also be obtained straightforwardly, as follows:
\begin{equation}
    h_{j+\frac12}^n \equiv \sum_{n'=0}^{n-1} J_{j+\frac12}^{n'+\frac12} - \sum_{j' \in \mathcal{J}} S_{j'}^0
\end{equation}
where $\mathcal{J}$ is the set of integers between $j+\frac12$ and $j-vt+\frac12$. 


\section{Supplemental Text 2: Short time covariance of $\mathcal{A}_0$}
\twocolumngrid

The Airy$_{\rm stat}$ process was introduced in~\cite{Baik.etal-CPAM2010} and it is given by
\begin{equation}\label{eqS1}
\begin{aligned}
\mathcal{A}_{\rm stat}(u)&=\max_{v}\{\sqrt{2}B(v)+\LL(v,0;u,1)\},\\
\mathcal{A}_{\rm stat}(0)&=\max_{w}\{\sqrt{2}B(w)+\LL(w,0;0,1)\},
\end{aligned}
\end{equation}
where $\LL$ is the directed landscape~\cite{DOV22} and $B(u)$ is a standard two-sided Brownian motion (i.e., starting at ``time'' $u=0$ and with diffusivity constant $1$).
This process is not stationary, but its increments are stationary, with
\begin{equation}
\mathcal{A}_{\rm stat}(u)-\mathcal{A}_{\rm stat}(0)\stackrel{(d)}{=}\sqrt{2}B(u).
\end{equation}

The stationary version of it, that we denote by $\mathcal{A}_0(u)$ and name  \emph{Airy$_0$ process}, is given by
\begin{equation}
\mathcal{A}_0(u):=\max_{v}\{\LL(v,0;u,1)+\sqrt{2}B(v)-\sqrt{2}B(u)\}.
\end{equation}
Our goal is to prove the following result.
\begin{theorem}\label{thmCovA0} For small $u$ we have
\begin{equation}
    \Cov(\mathcal{A}_0(0);\mathcal{A}_0(u)) =\Var(\mathcal{A}_0(0))-2u+o(u).
\end{equation}
\end{theorem}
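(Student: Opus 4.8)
The plan is to use the stationarity of $\mathcal{A}_0$ to convert the covariance into the variance of an increment, and then to show that this increment splits, to leading order, into two pieces of size $\sqrt{u}$ that are asymptotically uncorrelated.

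First, since $\mathcal{A}_0$ is stationary we have $\Var(\mathcal{A}_0(u))=\Var(\mathcal{A}_0(0))$ for all $u$, whence
\begin{equation}
\Cov(\mathcal{A}_0(0);\mathcal{A}_0(u)) = \Var(\mathcal{A}_0(0)) - \tfrac{1}{2}\Var\!\big(\mathcal{A}_0(u)-\mathcal{A}_0(0)\big),
\end{equation}
so it suffices to establish $\Var(\mathcal{A}_0(u)-\mathcal{A}_0(0))=4u+o(u)$. Next, the definitions give $\mathcal{A}_0(u)=\mathcal{A}_{\rm stat}(u)-\sqrt{2}B(u)$ (recall $B(0)=0$), so that, writing $D:=\mathcal{A}_{\rm stat}(u)-\mathcal{A}_{\rm stat}(0)$, we have $\mathcal{A}_0(u)-\mathcal{A}_0(0)=D-\sqrt{2}B(u)$. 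The stationary-increment identity $D\stackrel{(d)}{=}\sqrt{2}B(u)$ fixes $\Var(D)=2u$ exactly, and therefore
\begin{equation}
\Var\!\big(\mathcal{A}_0(u)-\mathcal{A}_0(0)\big) = 4u - 2\sqrt{2}\,\Cov\!\big(D;B(u)\big).
\end{equation}
The theorem thus reduces to the single decorrelation estimate $\Cov(D;B(u))=o(u)$, i.e.\ to an asymptotic independence of the landscape-driven increment $D$ from the boundary increment $B(u)$.

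To prove this I would use the variational structure. Let $W^\ast$ and $V^\ast_u$ denote the a.s.\ unique maximizers in $\mathcal{A}_{\rm stat}(0)$ and $\mathcal{A}_{\rm stat}(u)$. Inserting each as a test point in the opposite problem sandwiches the increment,
\begin{equation}
g(W^\ast) \le D \le g(V^\ast_u), \qquad g(w):=\LL(w,0;u,1)-\LL(w,0;0,1),
\end{equation}
where $g$ is a functional of the directed landscape alone. Since $B$ is independent of $\LL$, the anchor value $g(0)$ is independent of $B(u)$, so $\Cov(g(0);B(u))=0$. It then remains to show $D=g(0)+o_{L^2}(\sqrt{u})$, after which $\Cov(D;B(u))=\E[(D-g(0))B(u)]$, which by Cauchy--Schwarz is at most $\|D-g(0)\|_2\,\|B(u)\|_2=o(\sqrt{u})\cdot\sqrt{u}=o(u)$. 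The replacement $D=g(0)+o_{L^2}(\sqrt{u})$ rests on two facts: localization of the maximizer ($V^\ast_u\to W^\ast$, with integrable tails for $W^\ast$ controlling large excursions), and \emph{flatness in the starting variable} of the endpoint increment, $\Var(g(w)-g(0))=o(u)$ locally uniformly in $w$. The latter is the manifestation of geodesic coalescence: for small $u$ the geodesics emanating from different starting points toward the two nearby endpoints $(0,1)$ and $(u,1)$ merge before reaching them, so the effect of displacing the endpoint from $0$ to $u$ is governed by the landscape near the endpoint and is essentially the same for all starts in the relevant $O(1)$ window.

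The main obstacle is precisely this last step: making the approximation $D=g(0)+o_{L^2}(\sqrt{u})$ quantitative at the exact $L^2$ scale $\sqrt{u}$, since only then does the product with $B(u)$ fall below $u$. One must show that both the maximizer displacement and the start-variable variation of $g$ contribute only $o_{L^2}(\sqrt{u})$ to $D$, which requires the local Brownian regularity of $\LL$ together with quantitative coalescence estimates, upgraded from local to $L^2$ bounds via tail control on $W^\ast$. It is these landscape inputs, rather than the surrounding algebra, that constitute the heart of the argument.
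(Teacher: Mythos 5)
Your reduction of the theorem to the single estimate $\Cov(\mathcal{A}_{\rm stat}(u)-\mathcal{A}_{\rm stat}(0);B(u))=o(u)$ is exactly the paper's (the variance decomposition, $\Var(D)=2u$ and $\Var(\sqrt2 B(u))=2u$ are used identically), and your sandwich $g(W^\ast)\le D\le g(V^\ast_u)$ is the same comparison inequality the paper invokes. The gap lies in the only nontrivial step: you reduce everything to $\|D-g(0)\|_{L^2}=o(\sqrt u)$ and then do not prove it. This is not a routine approximation. The controlling quantity is a rectangle increment of the Airy sheet, $R=g(M)-g(-M)\ge 0$, which has mean $\E(R)=4uM+O(u^2)$ but is expected to vanish with high probability (geodesic coalescence) and to be of order $\E(R)/\Pb(R>0)$ otherwise; then $\E(R^2)\sim (uM)^2/\Pb(R>0)$, and whether this is $o(u)$ depends on the precise non-coalescence exponent as $u\to0$, which you neither identify nor bound. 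So the validity of your route hinges on a quantitative coalescence input for the directed landscape that is not supplied, and your own text flags this step as ``the main obstacle.'' As written, the proposal is an unproved reduction, not a proof.

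The paper's Lemma~\ref{thmSmallU} is engineered precisely to avoid any such $L^2$ estimate. After localizing the maximizers to $[-M,M+u]$ with probability $1-Ce^{-cM^3}$, it sandwiches $D$ between $X_-=g(-M)$ and $X_+=g(M+u)$ at \emph{deterministic} starting points, notes that $X_\pm$ are independent of $Y=\sqrt2 B(u)$, and splits on the sign of $Y$ so that the sandwich can be applied inside the expectation with the correct orientation. The surviving term is then $\left(\E(X_+)-\E(X_-)\right)\E(Y\Id_{Y\ge 0})=(4uM+2u^2)\sqrt{u/\pi}$: only the \emph{first} moment of the rectangle increment enters, multiplied by $\E(Y\Id_{Y\geq0})=O(\sqrt u)$, and no control of $\Var(g(w)-g(0))$ or of coalescence rates is ever needed (the fourth-moment bound of Lemma~\ref{lem2} is used only to handle the negligible event $G_M^c$). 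Optimizing $M\sim(\ln(1/u))^{1/3}$ yields the explicit bound $Cu^{3/2}(1+(\ln(1/u))^{1/3})$. If you want to complete your argument, the cleanest fix is to abandon the $L^2$ approximation of $D$ by $g(0)$ and instead use the independence of the deterministic-endpoint increments from $B$ together with the sign decomposition, as the paper does.
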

\begin{proof}
First of all, note that
\begin{equation}\label{eqS4}
\mathcal{A}_0(0)=\mathcal{A}_{\rm stat}(0),\quad \mathcal{A}_0(u)=\mathcal{A}_{\rm stat}(u)-\sqrt{2}B(u).
\end{equation}

The covariance of $\mathcal{A}_0$ between times $0$ and $u$ can be decomposed as
\begin{multline}\label{eqS6}
\Cov(\mathcal{A}_0(0);\mathcal{A}_0(u)) = \tfrac12 \Var(\mathcal{A}_0(0)) \\+ \tfrac12 \Var(\mathcal{A}_0(u))
-\tfrac12 \Var(\mathcal{A}_0(u)-\mathcal{A}_0(0)).
\end{multline}
The last term can be rewritten using \eqref{eqS4} as
\begin{multline}
\Var(\mathcal{A}_0(u)-\mathcal{A}_0(0)) = \Var(\mathcal{A}_{\rm stat}(u)-\mathcal{A}_{\rm stat}(0))\\
+\Var(\sqrt{2} B(u))
-2\Cov(\mathcal{A}_{\rm stat}(u)-\mathcal{A}_{\rm stat}(0);\sqrt{2}B(u))
\end{multline}
Since the distribution of $\mathcal{A}_0(u)$ is independent of $u$ and $\Var(\sqrt{2} B(u))=2u$, we obtain
\begin{multline}
\Cov(\mathcal{A}_0(0);\mathcal{A}_0(u)) =\Var(\mathcal{A}_0(0))-2u\\
+\Cov(\mathcal{A}_{\rm stat}(u)-\mathcal{A}_{\rm stat}(0);\sqrt{2}B(u)).
\end{multline}
In Lemma~\ref{thmSmallU} we show that the last term is $o(u)$, completing the proof.
\end{proof}

\begin{lemma}\label{thmSmallU} There exists a constant $C$ such that for all $0\leq u\leq 1$ we have
\begin{multline}
|\Cov(\mathcal{A}_{\rm stat}(u)-\mathcal{A}_{\rm stat}(0);\sqrt{2} B(u))|\\
\leq C u^{3/2} \left(1+(\ln(1/u))^{1/3}\right).
\end{multline}
\end{lemma}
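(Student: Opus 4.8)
The plan is to compute the covariance directly by Gaussian integration by parts with respect to the boundary Brownian motion $B$, reducing it to a geometric statement about the maximizer. Since $\E[\mathcal{A}_{\rm stat}(u)-\mathcal{A}_{\rm stat}(0)]=0$ and $\E[\sqrt{2}B(u)]=0$, the covariance equals $\E[(\mathcal{A}_{\rm stat}(u)-\mathcal{A}_{\rm stat}(0))\sqrt{2}B(u)]$. Conditioning on the (independent) directed landscape $\LL$, each $\mathcal{A}_{\rm stat}(x)=\max_v\{\sqrt2 B(v)+\LL(v,0;x,1)\}$ is an a.s.\ Lipschitz functional of the Gaussian path $B$, so it lies in the domain of the Malliavin derivative, and by the envelope (Danskin) theorem, using that the maximizer $v^*(x)$ is a.s.\ unique, its derivative at white-noise point $s$ is $\sqrt2\,\mathbbm{1}[0\le s\le v^*(x)]$. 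Writing $\sqrt2 B(u)$ as the first Wiener integral of $\sqrt2\,\mathbbm{1}_{[0,u]}$ and applying the integration-by-parts formula $\E[F\,\sqrt2B(u)]=\sqrt2\,\E\!\int_0^u D_s F\,ds$ gives
\begin{equation}
\Cov(\mathcal{A}_{\rm stat}(u)-\mathcal{A}_{\rm stat}(0);\sqrt2 B(u))
=2\,\E\!\int_0^u\!\big(\mathbbm{1}[v^*(u)\ge s]-\mathbbm{1}[v^*(0)\ge s]\big)\,ds.
\end{equation}
Invoking the monotonicity of maximizers in the endpoint (geodesics in the directed landscape are ordered, so $v^*(0)\le v^*(u)$), the integrand is nonnegative, and Fubini yields the clean representation $\Cov=2\int_0^u\Pb[v^*(0)<s\le v^*(u)]\,ds$. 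In particular the covariance is nonnegative, a useful consistency check against $\Cov=2u-\tfrac12\Var(\mathcal{A}_0(u)-\mathcal{A}_0(0))$. It remains to show $\Pb[v^*(0)<s\le v^*(u)]\le C\sqrt{u}\,(\ln(1/u))^{1/3}$ uniformly for $s\in(0,u)$, which upon integration over a window of length $u$ produces the claimed $u^{3/2}(1+(\ln(1/u))^{1/3})$.

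For the pointwise bound I would reformulate the crossing event through the left/right maxima split at level $s$. Setting $M_\pm(x)=\max_{\pm(w-s)\ge0}\{\sqrt2B(w)+\LL(w,0;x,1)\}$ and $\Delta(x):=M_+(x)-M_-(x)$, one has $v^*(x)\ge s\iff\Delta(x)\ge0$, so
\begin{equation}
\Pb[v^*(0)<s\le v^*(u)]=\Pb[\Delta(0)\le 0\le \Delta(u)]
\le \Pb[\Delta(0)\in[-\epsilon,0]]+\Pb[\Delta(u)-\Delta(0)>\epsilon]
\end{equation}
for any $\epsilon>0$. The first term is an anti-concentration estimate for the max-gap $\Delta(0)$ at the origin, and the second is controlled by the spatial regularity of the landscape, since $\Delta(u)-\Delta(0)$ is a difference of two max-differences whose size is bounded by $\sup_w|\LL(w,0;u,1)-\LL(w,0;0,1)|$.

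For the increment term I would first restrict the relevant starting points to a compact window $|w|\le R$, paying $\Pb[\,|v^*(0)|\vee|v^*(u)|>R\,]$, which decays superpolynomially in $R$ by the parabolic decay $\LL(w,0;x,1)+\sqrt2B(w)\sim -w^2$; choosing $R$ of order a power of $\ln(1/u)$ renders this negligible. On $[-R,R]$ the quantity is bounded by $\sup_{|w|\le R}|\LL(w,0;u,1)-\LL(w,0;0,1)|$, a supremum of spatial increments of $\LL$ over a spatial displacement $u$. The Dauvergne–Ortmann–Virág modulus-of-continuity and tail estimates give that, after normalization by $\sqrt{u}$, these increments are of order one with an upper tail of stretched-exponential type $\exp(-ct^{3})$; a chaining/union bound over $[-R,R]$ then yields $\Pb[\Delta(u)-\Delta(0)>\epsilon]\le C_R\exp(-c(\epsilon/\sqrt u)^{3})$. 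Balancing this against the anti-concentration bound $\Pb[\Delta(0)\in[-\epsilon,0]]\le C\epsilon$ forces $\epsilon\sim\sqrt{u}\,(\ln(1/u))^{1/3}$, which is precisely the source of the exponent $1/3$ on the logarithm. The main obstacle is the anti-concentration estimate $\Pb[\Delta(0)\in[-\epsilon,0]]\le C\epsilon$, i.e.\ a bounded density for the left/right max-gap near zero, uniformly as the split point $s\to0$. I expect to obtain it by exploiting the Brownian component $\sqrt2 B$ near $w=s$, which provides a Gaussian smoothing of the distribution of $\Delta(0)$; equivalently, it encodes that the maximizer location $v^*(0)$ possesses a bounded density, ruling out an atom of $\Delta(0)$ at $0$. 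Making this quantitative and uniform in $s$ and $u$ is the genuinely delicate step, while the remaining pieces follow from standard Malliavin calculus, geodesic monotonicity, and the known regularity of the directed landscape.
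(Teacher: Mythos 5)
Your route is genuinely different from the paper's, and it is incomplete at exactly the point you flag. The reduction via Gaussian integration by parts to $\Cov = 2\int_0^u \Pb[v^*(0)<s\le v^*(u)]\,ds$ is attractive (modulo justifying that the max functional lies in the domain of the Malliavin derivative and that the a.s.\ unique maximizer gives $D_sF=\sqrt2\,\Id[v^*(x)\ge s]$ for $s\in(0,u)$, which is standard but should be argued). However, everything then hinges on the anti-concentration estimate $\Pb[\Delta(0)\in[-\epsilon,0]]\le C\epsilon$ for the left/right max-gap, \emph{uniformly} in the split point $s\in(0,u)$ and in $u$, and you do not prove it — you only say you ``expect to obtain it'' from Gaussian smoothing. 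This is not a routine step: a small gap $\Delta(0)$ corresponds to a near-tie between the maxima over the two half-lines, which can occur with the two near-argmaxes far apart, so it is not simply equivalent to the argmax having a bounded density at $s$; and the smoothing heuristic degenerates as $s\to 0$ because the Brownian term contributes vanishing variance near the split point. As written, the crux of the lemma is assumed rather than proved. A secondary issue: your balancing uses a tail $\exp(-c(\epsilon/\sqrt u)^3)$ for the normalized spatial increment of $\LL$; since these increments are locally Brownian, the natural tail is Gaussian, $\exp(-c(\epsilon/\sqrt u)^2)$, which would force $\epsilon\sim\sqrt u\,(\ln(1/u))^{1/2}$ and yield $u^{3/2}(\ln(1/u))^{1/2}$ — still $o(u)$, hence enough for Theorem~\ref{thmCovA0}, but weaker than the stated bound (whose $1/3$ exponent comes from the cubic tail of the \emph{transversal fluctuation} of the argmax, not of the increments).

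For contrast, the paper's proof is engineered precisely to avoid any density or anti-concentration input. It localizes the argmaxes to $[-M,M+u]$ at cost $Ce^{-cM^3}$, then sandwiches $\mathcal{A}_{\rm stat}(u)-\mathcal{A}_{\rm stat}(0)$ between the landscape increments $X_\pm=\LL(\pm M_\ast,0;u,1)-\LL(\pm M_\ast,0;0,1)$ seen from the \emph{deterministic} endpoints of the window (the comparison/crossing inequality of~\cite{CP15b}). Because $X_\pm$ involve only $\LL$ and not $B$, they are independent of $Y=\sqrt2 B(u)$, so after splitting on the sign of $Y$ the leading contribution is computed exactly as $(\E(X_+)-\E(X_-))\,\E(Y\Id_{Y\ge0})=\Or(u^{3/2}M)$, and optimizing $M\sim(\ln(1/u))^{1/3}$ finishes. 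If you want to salvage your approach, you would need to supply the uniform bounded-density statement for $\Delta(0)$; alternatively, you could graft the paper's key idea into your framework by bounding $\Pb[v^*(0)<s\le v^*(u)]\le\Pb[|v^*(0)|\vee|v^*(u)|>M]+\Id[s\le \text{window}]$-type estimates, but that essentially reproduces the paper's localization argument.
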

Here, of course, the threshold $1$ can be replaced by any arbitrary $u_0>0$ by adapting the constant $C$ appropriately.
\begin{proof}[Proof of Lemma~\ref{thmSmallU}]
Let us denote by $v_0$ (resp.\ $w_0$) the argmax of the variational formulas in \eqref{eqS1}. Consider the event
\begin{equation}
G_M=\{-M\leq v_0,w_0\leq M+u\}.
\end{equation}
Using Theorem~2.5 and Proposition~2.7 of~\cite{EJS20} we get
\begin{equation}\label{eqGmC}
\Pb(G_M^c)\leq C e^{-c M^3}
\end{equation}
for some constants $C,c>0$.
Assume that $G_M$ occurs. Then the geodesic from $(-M,0)$ to $(u,1)$ crosses that of the stationary process $\mathcal{A}_{\rm stat}(0)$, and the geodesic from $(M,0)$ to $(0,1)$ crosses that of $\mathcal{A}_{\rm stat}(u)$. Using the comparison inequality approach first introduced in~\cite{CP15b} we get
\begin{multline}\label{eqComparison}
\LL(-M,0;u,1)-\LL(-M,0;0,1)\leq \mathcal{A}_{\rm stat}(u)-\mathcal{A}_{\rm stat}(0)\\
\leq \LL(M+u,0;u,1)-\LL(M+u,0;0,1).
\end{multline}
Denote
\begin{equation}
\begin{aligned}
&X_-=\LL(-M,0;u,1)-\LL(-M,0;0,1),\\
&X=\mathcal{A}_{\rm stat}(u)-\mathcal{A}_{\rm stat}(0),\\
&X_+=\LL(M+u,0;u,1)-\LL(M+u,0;0,1),\\
&Y=\sqrt{2}B(u).
\end{aligned}
\end{equation}
Notice that in distribution
\begin{equation}\label{eqDistr}
\begin{aligned}
X_-&\stackrel{(d)}{=}\mathcal{A}_2(u)-(M+u)^2-\mathcal{A}_2(0)+M^2\\&=\mathcal{A}_2(u)-\mathcal{A}_2(0)-2 u M -u^2,\\
X&\stackrel{(d)}{=}\sqrt{2} B(u),\\
X_+&\stackrel{(d)}{=}\mathcal{A}_2(u)-M^2-\mathcal{A}_2(0)+(M+u)^2\\&=\mathcal{A}_2(u)-\mathcal{A}_2(0)+2 u M +u^2,
\end{aligned}
\end{equation}
where $\mathcal{A}_2$ is the Airy$_2$ process~\cite{Prahofer.Spohn-JSP2002}. More importantly, $X_-$ and $X_+$ are independent of $Y$.

Below we use the inequalities obtained by applying Cauchy-Schwarz and/or bounding the variance by the second moment, namely
\begin{equation}\label{eqIneq}
|\Cov(A\Id_C;B)|\leq \E(A^4)^{1/4} \Var(B)^{1/2} \Pb(C)^{1/4}
\end{equation}
for general random variables $A,B$ and events $C$.
First we decompose by linearity
\begin{equation}
\Cov(X;Y)=\Cov(X\Id_{G_M}; Y)+\Cov(X\Id_{G_M^c}; Y).
\end{equation}
Using \eqref{eqIneq} we get
\begin{equation}
|\Cov(X\Id_{G_M^c}; Y)|\leq \E(X^4)^{1/4} \Var(Y)^{1/2} \Pb(G_M^c)^{1/4}.
\end{equation}
By \eqref{eqDistr} we have $\E(X^4)=12 u^2$ and $\Var(Y)=2u$, so that, together with \eqref{eqGmC},
\begin{equation}
|\Cov(X\Id_{G_M^c}; Y)|\leq C u e^{-c M^3/4}.
\end{equation}

Next we need to bound $\Cov(X\Id_{G_M}; Y)$, which is equal to $\E(X \Id_{G_M} Y)$ since $\E(Y)=0$. We further decompose on $Y\geq 0$ and $Y<0$, and use \eqref{eqComparison}, to get
\begin{equation}
\begin{aligned}
\label{eq4}
&\Cov(X\Id_{G_M}; Y) \\
&= \E(X \Id_{G_M} Y \Id_{Y\geq 0})+\E(X \Id_{G_M} Y \Id_{Y<0})\\
&\leq\E(X_+ \Id_{G_M} Y \Id_{Y\geq 0})+\E(X_- \Id_{G_M} Y \Id_{Y<0})\\
&=\E(X_+  Y \Id_{Y\geq 0})+\E(X_- Y \Id_{Y<0}) \\
&\quad- \E(X_+ \Id_{G_M^c} Y \Id_{Y\geq 0})+\E(X_- \Id_{G_M^c} Y \Id_{Y<0}).
\end{aligned}
\end{equation}

The last two terms are bounded similarly. We have
\begin{equation}\label{eqS21}
\begin{aligned}
|\E(X_+ \Id_{G_M^c} Y \Id_{Y\geq 0})|&\leq \sqrt[4]{\E(X_+^4)} \sqrt[4]{\Pb(G_M^c)} \sqrt{\E(Y^2\Id_{Y\geq 0})}\\
&\leq \sqrt[4]{\E(X_+^4)} \sqrt[4]{\Pb(G_M^c)} \sqrt{\E(Y^2)}.
\end{aligned}
\end{equation}
Using $(a+b)^4\leq 8 (a^4+b^4)$ we get
\begin{equation}\label{eqS22}
\E(X_+^4) \leq 8 \E((\mathcal{A}_2(u)-\mathcal{A}_2(0))^4)+ 8 u^4 (u+2M)^4.
\end{equation}
By Lemma~\ref{lem2} below, in \eqref{eqS22} we get $\sqrt[4]{\E(X_+^4)}\leq C \sqrt{u} (1+u^2 M^4)^{1/4}$ for some constant $C$. So,
\begin{equation}
|\E(X_+ \Id_{G_M^c} Y \Id_{Y\geq 0})| \leq C u  (1+u^2 M^4)^{1/4} e^{-c M^3/4}.
\end{equation}
The same bound holds true for $\E(X_- \Id_{G_M^c} Y \Id_{Y<0})$.

The sum of the first two terms in  \eqref{eq4} are given by
\begin{equation}
\begin{aligned}
&\E(X_+  Y \Id_{Y\geq 0})+\E(X_- Y \Id_{Y<0}) \\
&= \E(X_- Y)+\E((X_+-X_-) Y \Id_{Y\geq 0})\\ 
&= (\E(X_+)-\E(X_-))\E(Y \Id_{Y\geq 0}),
\end{aligned}
\end{equation}
where we used the fact that $X_\pm$ and $Y$ are independent and that $\E(Y)=0$. We can compute explicitly $\E(Y \Id_{Y\geq 0})=\sqrt{u}/\sqrt{\pi}$ and $\E(X_+)-\E(X_-)=4 u M + 2 u^2$.
Thus,
\begin{equation}
\E(X_+  Y \Id_{Y\geq 0})+\E(X_- Y \Id_{Y<0}) = c_2 u^{3/2}(2M+u).
\end{equation}

Putting all together we get
\begin{multline}
|\Cov(\mathcal{A}_{\rm stat}(u)-\mathcal{A}_{\rm stat}(0);\sqrt{2} B(u))|\\
\leq C u \left(e^{-c M^3/4}(1+(1+u^2 M^4)^{1/4})+ u^{1/2}M\right).
\end{multline}
Finally we choose the value of $M$ depending on $u$. With $M=(\frac{2}{c} \ln(1/u))^{1/3}$ we get $(1+(1+u^2 M^4)^{1/4})=\Or(1)$, $u^{1/2}M=(2/c)^{1/3} \sqrt{u} (\ln(1/u))^{1/3}$ and $e^{-c M^3/4}=u^{1/2}$, which imply the claimed estimate.
\end{proof}

\begin{lemma}\label{lem2} There exists a constant $c_1$ such that for all $0\leq u\leq 1$,
\begin{equation}
\E((\mathcal{A}_2(u)-\mathcal{A}_2(0))^4)\leq c_1 u^2.
\end{equation}
\end{lemma}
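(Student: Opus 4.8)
The plan is to reduce the fourth-moment bound to a single uniform tail estimate for the increment $Z_u:=\mathcal{A}_2(u)-\mathcal{A}_2(0)$ and then integrate. The starting point is the layer-cake identity
\begin{equation}
\E(Z_u^4)=\int_0^\infty 4 s^3\,\Pb(|Z_u|\geq s)\,\rd s,
\end{equation}
so that everything hinges on controlling $\Pb(|Z_u|\geq s)$ with the correct joint dependence on $s$ and $u$.

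The key input I would establish is a modulus-of-continuity bound for the Airy$_2$ process carrying the correct Brownian spatial scaling together with a stretched-exponential tail, uniformly over small windows: there exist constants $C,c,\alpha>0$ such that for all $0\leq u\leq 1$ and all $s\geq 0$,
\begin{equation}\label{eqTailA2}
\Pb(|\mathcal{A}_2(u)-\mathcal{A}_2(0)|\geq s)\leq C\exp\!\left(-c\,(s/\sqrt{u})^\alpha\right).
\end{equation}
Such an estimate follows from the directed-landscape continuity results of~\cite{DOV22} via the parabolic representation $\mathcal{A}_2(u)=\LL(0,0;u,1)+u^2$: over a window of length $u\leq 1$ the deterministic parabola varies only by $u^2=\Or(\sqrt{u})$, so it does not alter the tail at the fluctuation scale $\sqrt{u}$, while the random part inherits the H\"older-$1/2^-$ spatial regularity of the landscape with stretched-exponential control. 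One may take $\alpha=3/2$, consistent both with the $e^{-c s^{3/2}}$ upper tail of the one-point marginal when $u=\Or(1)$ and with the near-Gaussian behavior $e^{-c s^2/u}$ in the bulk $s\sim\sqrt{u}$.

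Granting \eqref{eqTailA2}, the conclusion is immediate after the substitution $t=s/\sqrt{u}$:
\begin{equation}
\begin{aligned}
\E(Z_u^4)&=u^2\int_0^\infty 4 t^3\,\Pb(|Z_u|\geq t\sqrt{u})\,\rd t\\
&\leq u^2\int_0^\infty 4 C\, t^3 e^{-c t^\alpha}\,\rd t = c_1 u^2,
\end{aligned}
\end{equation}
the final integral being a finite constant because $\alpha>0$; this gives the claim with $c_1$ independent of $u\in[0,1]$.

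The main obstacle is the tail estimate \eqref{eqTailA2} itself, specifically its uniformity as $u\to 0$ together with the $s/\sqrt{u}$ scaling. Two regimes must be reconciled: the near-Gaussian behavior at the typical increment scale $s\sim\sqrt{u}$, which produces the dominant $u^2$ contribution to the integral, and the intrinsically heavier Airy tails that emerge at scale $s=\Or(1)$; the scaled form $s/\sqrt{u}$ must bound both, and the constants must not degenerate as the window shrinks. Once this quantitative modulus of continuity — exactly of the type proven for the directed landscape — is secured, the remaining steps are the routine layer-cake integration above.
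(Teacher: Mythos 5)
Your argument is correct in outline and takes a genuinely different route from the paper's. The paper does not pass through a tail bound for the increment at all: it invokes the comparison techniques of~\cite{CP15b} together with Lemma~3.4 of~\cite{FO17} to obtain a pathwise domination $|\mathcal{A}_2(u)-\mathcal{A}_2(0)|\leq \sqrt{2}B(u)+u^2+2\kappa u$ on an event of probability at least $1-Ce^{-c\kappa^2}$, truncates the complementary event using the Tracy--Widom one-point tails, and then optimizes the cutoffs ($\kappa=u^{-1/2}$, $K=u^{-2/3}$). Your layer-cake reduction to a scaled tail estimate $\Pb(|\mathcal{A}_2(u)-\mathcal{A}_2(0)|\geq s)\leq C\exp(-c(s/\sqrt{u})^{\alpha})$ is cleaner once that estimate is granted, and the integration step is routine and correct, including the observation that the parabolic shift $u^2\leq \sqrt{u}$ for $u\leq 1$ can be absorbed into the constants. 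Both proofs thus rest on a cited comparison-type input; yours concentrates the entire difficulty into the single tail bound.

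One point deserves care before the proof can be called complete. What you need is the \emph{two-point} tail estimate of~\cite{DOV22} (the quantitative increment bounds that precede their modulus-of-continuity corollary), not the uniform H\"older-$1/2^{-}$ regularity that your prose gestures at. The uniform modulus of continuity carries an unavoidable logarithmic correction in the spatial increment, as in L\'evy's modulus for Brownian motion; if you used that version, your tail would degrade to $C\exp\bigl(-c\,(s/(\sqrt{u}\,\log^{1/2}(1/u)))^{\alpha}\bigr)$ and the layer-cake integral would only yield $\E((\mathcal{A}_2(u)-\mathcal{A}_2(0))^4)\leq C u^2\log^{2}(1/u)$, strictly weaker than the stated lemma. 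The two-point version, with stretched-exponential decay in $s/\sqrt{u}$ and no logarithm, does hold and is what you must cite explicitly; with that reference pinned down, the remaining steps of your argument go through as written.
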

\begin{proof}
Using the comparison inequality techniques of~\cite{CP15b}, as a consequence of the bounds of Lemma~3.4 of~\cite{FO17} (taking the $N\to\infty$ in there), we get
\begin{equation}\label{eq6}
|\mathcal{A}_2(u)-\mathcal{A}_2(0)|\leq \sqrt{2} B(u) + u^2+2\kappa u
\end{equation}
on a set $\Omega_\kappa$ with $\Pb(\Omega_\kappa)\geq 1-C e^{-c \kappa^2}$. Furthermore, let $\Omega_{\rm Cut}=\{ |\mathcal{A}_2(u)|\leq K\textrm{ and }|\mathcal{A}_2(0)|\leq K\}$. Then $\Pb(\Omega_{\rm Cut})\geq 1-C e^{-\frac43 K^{3/2}}$ since the one-point distribution of $\mathcal{A}_2(u)$ is the GUE Tracy-Widom distribution function~\cite{TW94}.

On $\Omega_G=\Omega_{\rm Cut}\cap \Omega_\kappa$, using $(a+b)^4\leq 8 (a^4+b^4)$, we get
\begin{multline}
\E((\mathcal{A}_2(u)-\mathcal{A}_2(0))^4) \leq 8 \E((\mathcal{A}_2(u)-\mathcal{A}_2(0))^4 \Id_G)\\
+8 \E((\mathcal{A}_2(u)-\mathcal{A}_2(0))^4\Id_{G}^c).
\end{multline}
The second term is bounded by $8 (2K)^4 \Pb(\Omega_G^c)$ since $|\mathcal{A}_2(u)-\mathcal{A}_2(0)|\leq 2K$. For the first term, using \eqref{eq6} we get
\begin{equation}
   \E((\mathcal{A}_2(u)-\mathcal{A}_2(0))^4 \Id_G)
   \leq 32 \E(|B(u)|^4)+ 8 (u^2+2\kappa u)^4.
\end{equation}
Since $\E(|B(u)|^4)=3 u^2$, if we choose for instance $K=u^{-2/3}$ and $\kappa=u^{-1/2}$, we obtain $\E((\mathcal{A}_2(u)-\mathcal{A}_2(0))^4)\leq C u^2$ for some constant $C>0$.
\end{proof}

Finally, let us compare the short-time behavior of the covariance of $\mathcal{A}_0$ with that of the other Airy processes. For the Airy$_2$ and the Airy$_1$ processes, the small $u$ behavior are the same: using the decomposition \eqref{eqS6} and the fact that locally the increments are as that of the stationary case, namely $\sqrt{2}B(u)$, one obtains~\cite{Prahofer.Spohn-JSP2002,QR12} $\Cov(\mathcal{A}_\ell(u),\mathcal{A}_\ell(0))=\Var(\mathcal{A}_\ell(0))-u+o(u)$, $\ell=1,2$.

For the stationary case, the decomposition \eqref{eqS6} and the fact that $\Var(\mathcal{A}_{\rm stat}(u)-\mathcal{A}_{\rm stat}(0))=2u$ leads to
\begin{multline}\label{eq5}
 \Cov(\mathcal{A}_{\rm stat}(0);\mathcal{A}_{\rm stat}(u)) 
=\Var(\mathcal{A}_{\rm stat}(0))-u\\
+\frac12 \left(\Var(\mathcal{A}_{\rm stat}(u))-\Var(\mathcal{A}_{\rm stat}(0))\right).
\end{multline}
Let $F_{{\rm BR},u}$ be the Baik-Rains distribution with parameter $u$~\cite{Baik.Rains-JSP2000,FS05a}. Then since $\mathcal{A}_{\rm stat}(u)$ is distributed according to $F_{{\rm BR},u}$ and it has expectation equal to zero (as a consequence of stationarity) we have
\begin{equation}
\Var(\mathcal{A}_{\rm stat}(u)) = \int_\mathbbm{R} s^2 dF_u(s) =:g_{\rm sc}(u).
\end{equation}
The latter is a scaling function which was already partially studied, see Section 7.2 of~\cite{PS01} for instance. $g_{\rm sc}(u)$ is symmetric, it increases linearly in $u$ as $|u|\to\infty$ (this cancels linear term in \eqref{eq5}) and $g''(0)>0$. This implies that for small $u$,
\begin{equation}
\Var(\mathcal{A}_{\rm stat}(u))-\Var(\mathcal{A}_{\rm stat}(0)) = g''(0) u^2 + \Or(u^4).
\end{equation}
Thus the covariance of the Airy$_{\rm stat}$, similarly to that of the Airy$_1$ and Airy$_2$ process, has a linear term $-u$ for small $u$, which differs from the linear term $-2u$ for the Airy$_0$ process.

\onecolumngrid
\section{Supplemental Text 3: Correlation functions $C_s(\ell, t)$ and $C_t(t_1, t_2)$ and physical observables in the quantum model}
\label{sec:TimeOrdering}

Since the current operators $\hat{J}_j(t)$ at different times do not commute, it may not be obvious whether the correlation functions related to the magnetization transfer $\Delta \hat{S}_j = \int_0^t \hat{J}_j(t') dt'$ are measurable or not. However, since these correlation functions are obtained from the expectation value of the current operators, they are measurable, at least in principle, as we explicitly describe below.

Since $\langle \Delta \hat{S}_j(t) \rangle = 0$, the correlation functions are given as
\begin{equation}
        C_s(\ell, t) = \langle \Delta\hat{S}_{0} (t)\Delta\hat{S}_{\ell}(t)\rangle,  \quad C_t(t_1, t_2) = \langle \Delta\hat{S}_{0}(t_1) \Delta\hat{S}_{0}(t_2)\rangle,  \label{eq:QuantCorr}
\end{equation}
 where the brackets represent the expectation values for the infinite-temperature equilibrium state.
Using the current-current correlation function $F_\ell(s) \equiv 2^{-L} \mathrm{Tr}[\hat{J}_\ell e^{-i\hat{H}s}\hat{J}_0 e^{i\hat{H}s}]$, they are recast into
\begin{equation}
        C_s(\ell, t) = \int_{0}^t \!\!\! dt' \!\! \int_{0}^t \!\!\! dt'' F_\ell(t'- t''),  \quad C_t(t_1, t_2) = \int_{0}^{t_1} \!\!\! dt' \!\! \int_{0}^{t_2} \!\!\! dt'' F_\ell(t'- t''),
\end{equation}
and thus the problem is reduced to evaluating $F_\ell(s)$. 

While there are several ways to obtain $F_\ell(s)$, here we consider an approach based on the real-time evolution of a density matrix. 
Let us consider $G_\ell(s)$, defined as follows:
\begin{equation}
    G_\ell(s) \equiv \mathrm{Tr}[\hat{J}_\ell \hat{\rho}(s)] 
\end{equation}
where $\hat{\rho}(s) \equiv e^{-i\hat{H}s}\hat{\rho}_0 e^{i\hat{H}s}$, $\hat{\rho}_0 \equiv (\hat{I} - 2\hat{J}_0)/2^{L}$, $\hat{I}\equiv\bigotimes_{j=-L/2+1}^{L/2} \hat{I}_j$, and $\hat{I}_j$ is the identity operator acting on the $j$-th site. 
The initial density matrix 
$\hat{\rho}_0$ is also represented as 
\begin{equation}
    \hat{\rho}_0 =  \left(\bigotimes_{j=-\frac{L}{2}+1}^{-1} \frac{\hat{I}_j}{2} \right) \otimes \hat{\rho}_{01} \otimes  \left(\bigotimes_{j=2}^{\frac{L}{2}} \frac{\hat{I}_j}{2} \right), \quad 
    \hat{\rho}_{01} = \frac{1}{4} \ket{\uparrow \uparrow}\!\!\bra{\uparrow \uparrow} + \frac{1}{2} \ket{\psi}\!\! \bra{\psi} + \frac{1}{4} \ket{\downarrow \downarrow}\!\! \bra{\downarrow \downarrow},
\end{equation}
with $\ket{\psi} = (\ket{\uparrow \downarrow} + i \ket{\downarrow \uparrow})/\sqrt{2}$. Here, $\ket{\sigma \sigma'} \equiv \ket{\sigma}_0 \otimes \ket{\sigma'}_1$ and $\ket{\sigma}_j$ is the eigenstate of $\hat{S}_j^z$, i.e., $\hat{S}_j^z \ket{\uparrow}_j = (1/2) \ket{\uparrow}_j$ and $\hat{S}_j^z \ket{\downarrow}_j = (-1/2) \ket{\downarrow}_j$.
Note that $\hat{\rho}_0$ is different (at the 0th and 1st sites) from the infinite-temperature equilibrium state that we used for simulations as the initial state, but with this initial state, we can obtain the correlation functions \pref{eq:QuantCorr} studied in this work.
Specifically, since $\mathrm{Tr}[\hat{J}_\ell]=0$, the function $G_\ell(s)$ is related to $F_\ell(s)$ as 
\begin{equation}
    G_\ell(s) = -2F_\ell(s).
\end{equation}
This shows that $F_\ell(s)$ is obtained from the current expectation value $G_\ell(t)$. This means that we can compute $C_s(\ell,t)$ and $C_t(t_1, t_2)$ from the current measurement in the time evolution starting from the initial density matrix $\hat{\rho}_0$.

\onecolumngrid
\newpage
\section{Supplemental Figure}

\begin{figure}[h!]
\includegraphics[width=\hsize,clip]{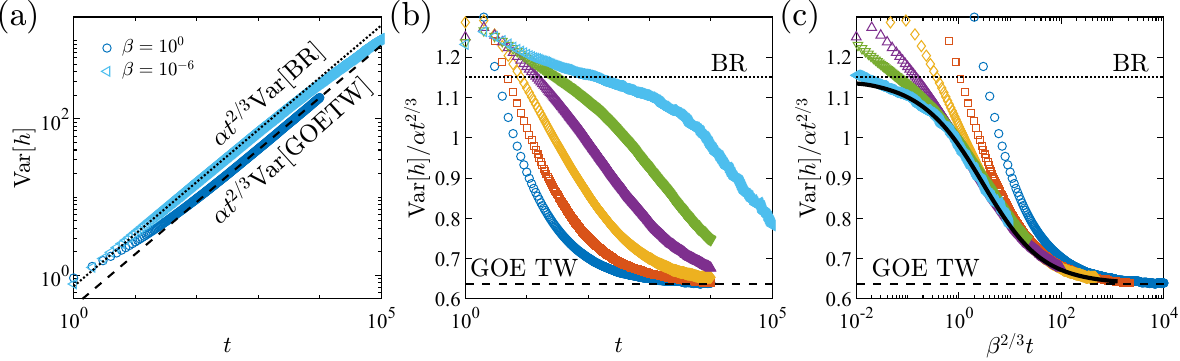}
\centering
\caption{
Stationary-to-flat crossover in TASEP with Ornstein-Uhlenbeck initial conditions.
The initial conditions $h_0(x,0)$ were generated by Monte Carlo sampling with statistical weight $e^{-\beta \int h_0(x,0)^2 dx}$.
The parameter $\alpha$ for this case is known to be $2^{-2/3}$ exactly.
(a) Height variance Var$[h]$ against time $t$.
The dotted and dashed lines display the power laws for the stationary case (Baik-Rains distribution) and the flat case (GOE Tracy-Widom distribution), respectively, proportional to $t^{2/3}$ for both cases.
(b)(c) Rescaled variance Var$[h] / \alpha t^{2/3}$ against $t$ (b) and $\beta^{2/3}t$ (c), for $\beta = 10^0, 10^{-1}, 10^{-2}, 10^{-3}, 10^{-4}, 10^{-6}$ from left to right in (b).
The thick solid line in (c) is the flat-to-stationary crossover function obtained in Ref.\,\cite{Takeuchi-PRL2013}, displayed with an arbitrary horizontal shift.
}
\label{figS:TASEP}
\end{figure}

\bibliography{ref}